\newcommand{\diff}{\mathop{}\mathopen{}\mathrm{d}}
\newtheorem{theorem}{Theorem}
\newtheorem{corollary}{Corollary}
\newcommand{\approxpropto}{\mathrel{\vcenter{
  \offinterlineskip\halign{\hfil$##$\cr
    \propto\cr\noalign{\kern2pt}\sim\cr\noalign{\kern-2pt}}}}}
\newcommand*\samethanks[1][\value{footnote}]{\footnotemark[#1]}
\title{Ghost Units Yield Biologically Plausible Backprop in Deep Neural Networks}
\author{{\large \bf Thomas Mesnard\thanks{Both authors contributed equally to this paper. Alphabetical order.}~~(thomas.mesnard@gmail.com)} \\
  Mila - Quebec AI Institute\\
  Universite de Montreal, Montreal, Canada
  \AND {\large \bf Gaëtan Vignoud\samethanks~~(gaetan.vignoud@gmail.com)} \\
  Mila - Quebec AI Institute\\
  Universite de Montreal, Montreal, Canada
  \AND {\large \bf João Sacramento (sacramento@ini.ethz.ch)} \\
  Institute of Neuroinformatics\\
  University of Zurich and ETH Zurich, Switzerland
  \AND {\large \bf Walter Senn (senn@pyl.unibe.ch)} \\
  Department of Physiology, University of Bern\\ Bern 3012, Switzerland\\
  \AND {\large \bf Yoshua Bengio\thanks{CIFAR Fellow}~  (yoshua.bengio@mila.quebec)} \\
  Mila - Quebec AI Institute\\
  Universite de Montreal, Montreal, Canada}
\begin{document}
\maketitle

\begin{abstract}
In the past few years, deep learning has transformed artificial intelligence research and led to impressive performance in various difficult tasks. However, it is still unclear how the brain can perform credit assignment across many areas as efficiently as backpropagation does in deep neural networks.
In this paper, we introduce a model that relies on a new role for a neuronal inhibitory machinery,  referred to as ghost units. By cancelling the feedback coming from the upper layer when no target signal is provided to the top layer, the ghost units enables the network to backpropagate errors and do efficient credit assignment in deep structures. While considering one-compartment neurons and requiring very few biological assumptions, it is able to approximate the error gradient and achieve good performance on classification tasks. Error backpropagation occurs through the recurrent dynamics of the network and thanks to biologically plausible local learning rules. In particular, it does not require separate feedforward and feedback circuits. Different mechanisms for cancelling the feedback were studied, ranging from complete duplication of the connectivity by long term processes to online replication of the feedback activity. This reduced system combines the essential elements to have a working biologically abstracted analogue of backpropagation with a simple formulation and proofs of the associated results. Therefore, this model is a step towards understanding how learning and memory are implemented in cortical multilayer structures, but it also raises interesting perspectives for neuromorphic hardware.
\end{abstract}

\keywords{Biological backpropagation \and Credit assignment \and Feedback-alignment}

\section{Introduction}
\label{sec:intro}

Recently, deep learning~\cite{Goodfellow-et-al-2016-Book} has revolutionized artificial intelligence and led to impressive performance in various tasks such as computer vision~\cite{krizhevsky2012imagenet}, speech recognition~\cite{hinton2012deep} and machine translation~\cite{bahdanau2014neural}. Deep learning, thanks to backpropagation~\cite{almeida1987learning, pineda1987generalization}, is able to take advantage of the multilayer structure of the neural network to learn high level features relevant for the given task it is trained to perform. Those features are more and more abstract while going deeper in the network, and give rise to a high level representation of the data. In the case of visual tasks, when using convolutional neural networks~\cite{lecun1990handwritten}, the high level features learned thanks to the backpropagation algorithm are similar to the ones experimentally observed in the visual cortex~\cite{khaligh2014deep}.\\

However, it is still an open question how the brain is able to perform credit assignment in deep neural structures spanning multiple areas. The core algorithm used to train deep neural networks (i.e backpropagation) has been seen by the neuroscience community as being biologically implausible because the implementation used in deep learning relies on assumptions that cannot be met in the brain~\cite{bengio2015towards, neftci2017event} (i.e need for symmetric weights and a separate circuit for feedforward and gradient computations, precise timing between the forward and the backward paths with fixed activity of the neurons and knowledge of the derivative of the forward activation to correctly update the weights, high precision numbers to characterize forward activities and backpropagate errors compared to binary values occurring in the brain). Thanks to recent work~\cite{lillicrap2016random, nokland2016direct} the assumption that the feedback weights must be the exact transpose of the forward ones is no longer required to have efficient credit assignment thanks to the feedback-alignment mechanism. \cite{courbariaux2016binarized} showed it was possible to train deep neural networks using backpropagation with binarized activation functions and binary weights. \cite{scellier2017equilibrium} showed how the same neurons could be used for feedforward and gradient computations thanks to the recurrence induced by feedback connections and how nudging output units towards a lower-error configuration propagates, via feedback connections, error gradients in the inner layers of the circuit.\\

Recent studies have shown that local contrastive hebbian plasticity in an energy based model can implement backpropagation in deep neural structures thanks to the recurrent dynamics~\cite{bengio2017stdp, scellier2017equilibrium} while having promising results when using leaky integrate-and-fire neurons~\cite{mesnard2016towards}. Others~\cite{guerguiev2017towards} were able to train deep learning networks by approximating backpropagation in systems with multicompartment neurons.
Finally, a recent study~\cite{sacramento2018dendritic,NIPS2018_8089} used recurrent networks with inhibitory neurons to approximate backpropagation. Those inhibitory units aim to predict and cancel the feedback signal coming from the upper layers. When they are not able to correctly predict the incoming feedback, weights are updated proportionally to this prediction error which is closely related to the correct gradient that would have been obtained with classical backpropagation. This idea of predicting the upper incoming feedback activity to predict the correct gradient can be related to~\cite{jaderberg2016decoupled}, where side networks are introduced between each layer of the main network that learn in a supervised manner to predict the correct gradients based only on the feedforward inputs. A more thorough comparison with previous works on how backpropagation could be implemented in the brain is done in Section~\ref{sec:related_work}.
\\

In this paper, we consider a recurrent network composed of pyramidal units (PU) that can be identified with the feedforward units of a multilayer perceptron, with a dynamic of learning inspired from~\cite{scellier2017equilibrium} with two different phases. These cells integrate feedforward activity coming from the lower layers but also feedback activity coming from the upper layers. Moreover, in order to enable backpropagation of errors, we introduce a new type of interneurons, referred to as ghost units (GU). Their goal is to predict and cancel feedback from pyramidal units in the upper layer by integrating the same feedforward input without having access to any feedback from the following layers.
This property enables the network to converge quickly during the feedforward computation, in spite of the presence of recurrent connections, by canceling feedback coming from the upper layers thanks to the ghost units. This cancellation effect of the ghost units allows top-down corrective feedback to be correctly backpropagated when targets are provided in the weakly-clamped phase.
This gives the network the capacity to perform credit assignment in a multilayer structure by simply following its dynamics and updating the weights according to local plasticity learning rules.

\section{Backpropagation thanks to ghost units in a recurrent and dynamical neural network}
\label{sec:backprop}

\subsection{Architecture}
\label{subsec:arch}

We consider a biologically plausible implementation for backpropagation in a directed acyclic graph of feedforward connections with network input $x$. We consider that the network has $k+1$ layers. We will use $l=0$ for the first layer that represents the inputs and $l=k$ for the last layer which is the output of the network, see Figure~\ref{fig:ma_architecture} for a schematic representation of the architecture.

At each layer $l$, a node of this graph is associated with one or multiple pyramidal units (PU)\footnote{Note that a single feedforward unit (called pyramidal unit) could in reality be implemented by multiple pyramidal units with similar input and output connectivity, allowing the network to reduce the spiking noise, if integrate-and-fire neurons where used for example.} whose activity is denoted by $s_l[i]$ (the state of unit $i$ in layer $l$). Pyramidal units have an output non-linearity $\rho$ which maps their activity $s_l[i]$ to their firing rate $\rho(s_l[i])$. Technically, this transfer function must be $K$-Lipschitz continuous. $\mathcal{S}_l$ is the set of pyramidal units in layer $l$.

For a connectivity matrix $A$, we define $A_{\text{output layer},\text{input layer}}[i,j]$ which corresponds to the synaptic weight from unit $j$ in the input layer to unit $i$ in the output layer.

Both feedforward and feedback connections are considered in this model. The main feedforward synaptic weights $W^f_{l+1,l}[i,j]$ correspond to the influence of presynaptic unit $j$ (in layer $l$) on the postsynaptic unit $i$ (in layer $l+1$). The feedback weights $W^b_{l,l+1}[i,j]$ encapsulate the effect of the pyramidal unit $j$ (of layer $l+1$) on pyramidal unit $i$ (layer $l$). The network output is defined by the firing rate $\rho(s_k)$ of the output pyramidal units. This output is compared to target values $t_k$ and the performance is measured through a scalar cost function $C(\rho(s_k),t_k)$ which somehow compares the network output and the target. In the simulations, the mean-squared error function was used as cost function: $C(\rho(s_k),t_k)=\sum_{j \in \mathcal{S}_k}(\rho(s_k[j])-t_k[j])^2$, but other losses could also be implemented in the same way.\\

We also define the cost function $\tilde{C}(\rho(\tilde{s}_k),t_k)=\sum_{j \in \mathcal{S}_k}(\rho(\tilde{s}_k[j])-t_k[j])^2$ which corresponds to the cost function of the associated (purely feedforward) multilayer perceptron (MLP), where the associated activity is defined by:
\begin{equation}
    \tilde{s}_l = W^f_{l,l-1} \rho(\tilde{s}_{l-1})
\end{equation}

Training is decomposed in a free phase and a weakly-clamped phase following~\cite{scellier2017equilibrium}. During the free phase, the network evolves thanks to its recurrent dynamics with only inputs provided. During the weakly-clamped phase, both inputs and targets are presented to the network. A top-down error signal $-\beta\frac{\partial  C(\rho(s_k),t_k)}{\partial  \rho(s_k)}$ pushes the output units $s_k$ towards a value corresponding to a smaller loss $C$. $\beta=0$ corresponds to the free-phase and $\beta>0$ to the weakly-clamped one.

In addition, we consider a lateral network of ghost units (GU), which could be implemented by inhibitory interneurons. These units are represented by a scalar variable $g_l[i]$ for each unit $i$ in layer $l$. A ghost unit in a layer is only connected to the pyramidal units of the same layer, through two matrices $V^f_{l,l}[i,j]$ (for  lateral connections from the pyramidal unit $j$ to ghost unit $i$) and $V^b_{l,l}[i,j]$ (for the lateral connections from the ghost unit $j$ to the pyramidal unit $i$). These units aim to reproduce the feedback activity from the pyramidal units of the next layer during the forward phase, and therefore enable the network to directly compute the gradient during the weakly-clamped phase. These units are considered as inhibitory when projecting to the pyramidal neurons (expressed here as minus sign in $-V^b_{l,l}\rho(g_l)$, although the synaptic weights $V^b_{l,l}$ can themselves be negative). These ghost units are only present at each hidden layer $l$ ($l\neq 0$ and $l\neq k$).\\

We will show in the following section that the combination of lateral recurrent and feedback connections propagates the error through the network in a way that closely approximates backpropagation, so long as some assumptions are satisfied, regarding the ability of feedback connections to mimic feedforward connections (approximate symmetry) and of lateral connections to learn to cancel the feedback connections when there is no nudging.\\

\subsection{Notations}
\label{subsec:notations}
\begin{center}
\begin{tabular}{cccc}
   $x$ & network input & $\tau $& time constant \\
     PU & pyramidal unit & GU & ghost unit \\
   $s_l[i]$ & activity of PU $i$ in layer $l$ &  $C(\rho(s_k),t_k)$ & cost function \\
      $\tilde{s}_l[i]$ & activity of PU $i$ in layer $l$ in the MLP &  $\tilde{C}(\rho(\tilde{s}_k),t_k)$ & cost function of the MLP \\
   $\rho$ & neuronal transfer function & $g_l[i]$ & activity of GU  $i$ in layer $l$ \\
   $\mathcal{S}_l$ & \multicolumn{3}{c}{set of pyramidal units in layer $l$} \\
   $W^f_{l+1,l}[i,j]$ & \multicolumn{3}{c}{feedforward connection from PU $j$ (layer $l$) to PU $i$ (layer $l+1$)} \\
   $W^b_{l,l+1}[i,j]$ & \multicolumn{3}{c}{feedback connection from PU $j$ (layer $l+1$) to PU $i$ (layer $l$)}\\
   $V^f_{l,l}[i,j]$ & \multicolumn{3}{c}{lateral (recurrent) connection from PU $j$ to GU $i$ (layer $l$)}\\
   $V^b_{l,l}[i,j]$ & \multicolumn{3}{c}{lateral (recurrent) connection from GU $j$ back to PU $i$ (layer $l$)}
\end{tabular}
\end{center}

\subsection{Dynamics of the neurons}
\label{subsec:dynamics}

Three different inputs are integrated by pyramidal units in layer $l$:
\begin{itemize}[noitemsep,topsep=0pt]
\item $b_l = W^f_{l,l-1} \rho(s_{l-1})$ is the bottom-up input coming from the pyramidal units of layer $l-1$.
\item $t_l = W^b_{l,l+1} \rho(s_{l+1})$ is the top-down feedback coming from pyramidal units of layer $l+1$.
\item $c_l = V^b_{l,l} \rho(g_l)$ is the lateral feedback coming from the ghost units of the same layer $l$.
\end{itemize}

The pyramidal units are evolving through:
\begin{align}
  \tau\dot{s}_l = - s_l + b_l + e_l
\label{eq:dynamics_s}
\end{align}

where $e_l$ represents an error term whose expression depends on the layer. 

For the hidden layers, $e_l=t_l - c_l$ is the difference between the top-down feedback (the local target $t_l$) and the cancelling contribution from the inhibitory ghost units ($c_l$, counted negatively because of the inhibitory nature of the ghost units). For the output layer, $e_k= -\beta \frac{\partial  C}{\partial  \rho(s_k)}$ is the nudging term that indicates in which direction $s_k$ should move to reduce the output cost function $C=\sum_{j \in \mathcal{S}_k}(\rho(s_k[j])-t_k[j])^2$, with $t_k$ the target output values ($\beta=0$ in the free phase and $\beta>0$ in the weakly-clamped phase).

In particular, at the end of the forward phase when $e_l\rightarrow 0$, we have: $s_l=\tilde{s}_l=W^f_{l,l-1} \rho(\tilde{s}_{l-1})$. Because of perfect cancellation $t_l=c_l$, the network behaves like a feedforward multilayer perceptron.

The ghost units of layer $l$ follow:
\begin{align}
  \tau\dot{g}_l = -g_l + V^f_{l,l}\rho(s_l)
\label{eq:dynamics_g}
\end{align}

\section{Different architectures and learning procedures}
\label{sec:models}

\subsection{Network with 1-1 correspondence between the pyramidal units and the ghost units (MA)}
\label{subsec:ma}

\paragraph{Model description}
\label{subsubsec:ma_desc}

In this section, we consider that each pyramidal unit $j$ in layer $l+1$ has a corresponding ghost unit\footnote{In biology there are more pyramidal neurons than inhibitory neurons. Yet, GU may also include sub-classes of pyramidal neurons, such that the number of GU must not be smaller than the number of PU. Moreover, the code formed by pyramidal neurons may show some redundancy so that it could be compressed to a smaller number of effective PU.} in the previous layer $l$, and that the ghost units aim to replicate the activity of their associated pyramidal units by integrating the same inputs. In order to make the reading easier in this part, we use the same indices in the brackets for the ghost unit and its associated pyramidal unit. For example, ghost unit $j$ of layer $l$ (with activity $g_l[j]$) will be associated to the pyramidal unit $j$ of layer $l+1$ (activity $s_{l+1}[j]$). This architecture can be seen in Figure~\ref{fig:ma_architecture}.

\begin{figure}[ht]
  \centering
  \includegraphics[width=0.6\linewidth]{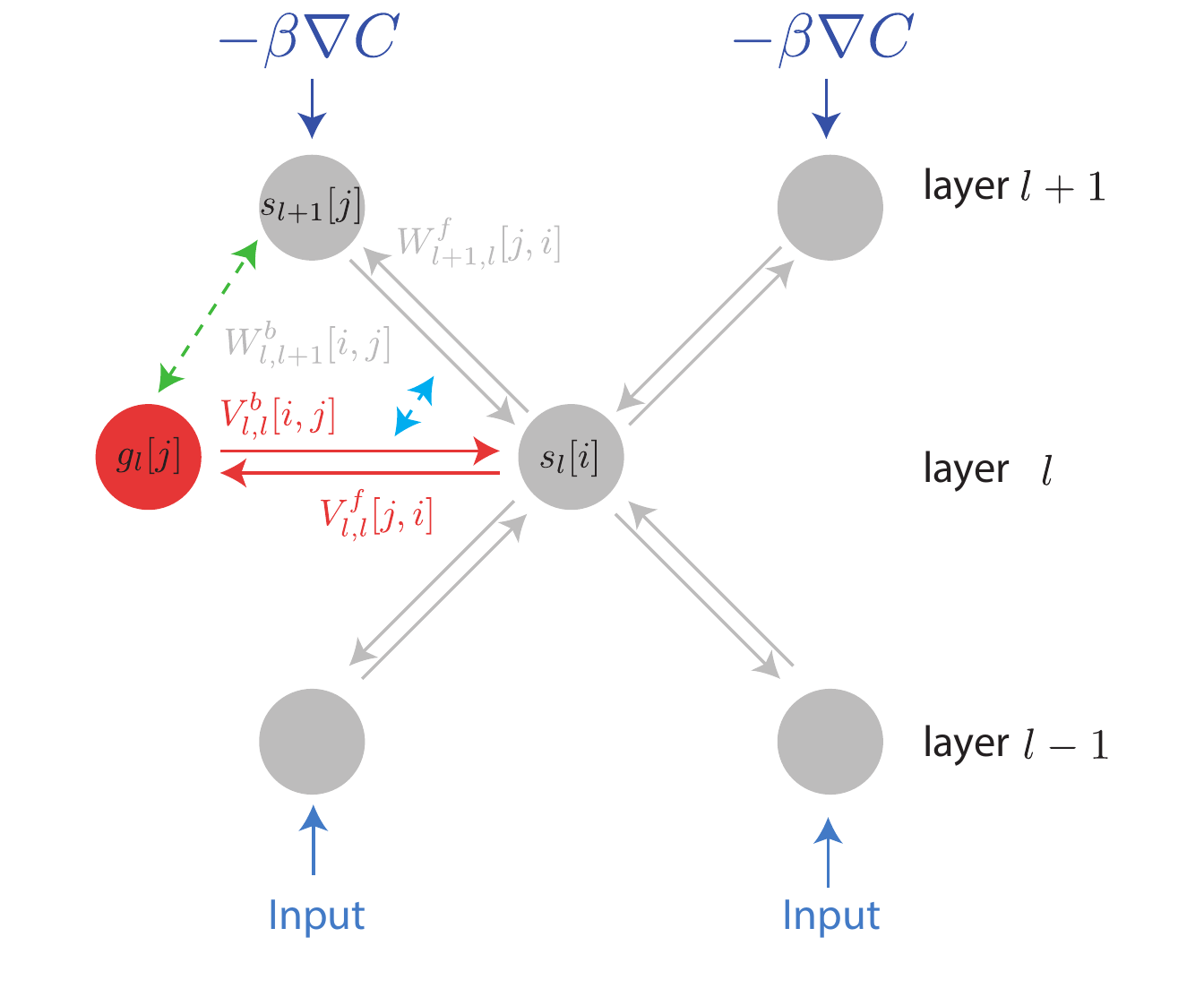}
\caption{Architecture for Model A (MA) network with 1-1 correspondence between ghost units and associated pyramidal units from the following layer. Pyramidal units (in grey) and ghost units (in orange) are connected through different weights matrices ($W^f$, $W^b$ in grey for PU-PU connectivity and $V^f$, $V^b$ in orange for PU-GU lateral connections). Ghost units of layer $l$ tend to copy the pyramidal unit activity of the following layer $l+1$, thanks to the evolution of $V^f$ eq.~\ref{eq:ma_dvf} (this interaction is represented by the green dotted line) and at the same time, to cancel the feedback coming from the pyramidal units of the next layer, because of $V^b$ updates eq.~\ref{eq:ma_dvb} (blue dotted line). The nudging (-$\beta \nabla C$) is presented at the output layer ($\beta=0$ in the free phase and $\beta \neq 0$ in the weakly-clamped phase).
}
\label{fig:ma_architecture}
\end{figure}

During the free-phase, only the lateral connections between the pyramidal and ghost units are updated. The local learning rules for the synaptic weights $V^f$ and $V^b$ are defined as follow:

\begin{itemize}[noitemsep,topsep=0pt]
\item $s_{l+1}$ acts like a target for the ghost units $g_l$ to learn $V^f_{l,l}$:
  \begin{align}
    \dot{V}^f_{l,l} = \eta_{V}(s_{l+1} - g_l) \rho(s_l)^T
    \label{eq:ma_dvf}
  \end{align}
  This minimizes $\|s_{l+1} - g_{l}\|$, i.e., the inhibitory ghost unit learns to imitate its associated pyramidal unit. $\eta_V$ is the learning rate.
  
\item  The top-down feedback $t_l$ onto layer $l$ acts as a target for the weights forming
  the canceling feedback $c_l$:
  \begin{align}
    \dot{V}^b_{l,l} = \eta_{V}(t_l - c_l) \rho(g_l)^T
    \label{eq:ma_dvb}
  \end{align}
  This minimizes $\|t_l - c_l\|$ for each layer $l$, with the same learning rate $\eta_V$.
  \end{itemize}
  During the weakly-clamped phase, only the $W^f$ and $W^b$ are updated through the following learning rules:  
  \begin{itemize}
\item The main weights (feedforward, from pyramidal units of layer $l$ to pyramidal units of layer $l+1$) are updated, using a local learning rule:
  \begin{align}
  \label{eq:ma_dw}
    \dot{W}^f_{l+1,l} =\eta_{W} \left(e_{l+1}\cdot \rho'(s_{l+1})\right) \rho(s_l)^T
  \end{align}
  This approximates gradient descent on $\tilde{C}$, see Theorem~\ref{thm:ma_bp}.
  \item The feedback weights are set equal to the transpose of the feedforward ones: $W^b_{l,l+1}=(W^f_{l+1,l})^T$.
\end{itemize}

This was implemented using Euler discretization, see Algorithm~\ref{algo:ma} for a more precise description of the algorithm.

We also consider a variant (MA') where all the updates are performed during all the phases. This version, with continuous updates of the weights is more close to what can be expected to happen in the brain.

\paragraph{A good approximation of backpropagation}
\label{subsubsec:ma_bp} The combination of these update rules, leads to the following theorem, where eq.~\ref{eq:ma_dw} can be seen as a close estimate to backpropagation.

\begin{theorem}
\label{thm:ma_bp}  
   We set the backward weights equal to the (adapting) forward weights,  $W^b_{l,l+1}=(W^f_{l+1,l})^T$, and assume that the ghost unit circuit ($V^f$, $V^b$) converged during the free phases, $\|s_{l+1} - g_l\|\rightarrow 0$, $\|t_l - c_l\|\rightarrow 0$ for all hidden layer $l$ (MA). Then for weak output nudging ($\beta$ small, $e_k= -\beta \frac{\partial  C}{\partial \rho(s_k)}$) errors converge to $e_l \approx -\beta \frac{\diff  \tilde{C}}{\diff  \rho(\tilde{s}_l)}$ for each hidden layer~$l$. So, the forward weights ($W^f$) become updated according to the classical backpropagated error gradient. 
\end{theorem}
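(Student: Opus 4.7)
The plan is to linearize the steady-state (fixed-point) equations of both the pyramidal and ghost dynamics around the free-phase equilibrium and to expand to first order in the small nudging parameter $\beta$. As a preliminary observation, at the free-phase equilibrium the cancellation hypothesis $\|t_l - c_l\|\to 0$ forces $e_l=0$ at every hidden layer, so the pyramidal equation $s_l = b_l + e_l$ collapses to $s_l^\star = W^f_{l,l-1}\rho(s_{l-1}^\star)$, which is exactly the MLP recursion; hence $s_l^\star = \tilde s_l$. I would also use $\|s_{l+1}-g_l\|\to 0$ together with the ghost steady state $g_l = V^f_{l,l}\rho(s_l)$ to deduce that $V^f_{l,l}\rho(s_l^\star) = W^f_{l+1,l}\rho(s_l^\star) = \tilde s_{l+1}$ on the operating activity, and similarly $V^b_{l,l}\rho(g_l^\star) = W^b_{l,l+1}\rho(s_{l+1}^\star)$, so that the lateral matrices reproduce the action of the forward and backward weights on the relevant patterns.

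Now I would write $s_l = \tilde s_l + \beta\,\delta s_l + O(\beta^2)$ (and likewise for $g_l$) and plug into the weakly-clamped steady-state equations. For the ghost units this gives $g_l = W^f_{l+1,l}\rho(s_l)$, while for the pyramidal units $s_{l+1} = W^f_{l+1,l}\rho(s_l) + e_{l+1}$, so subtraction yields the key identity $s_{l+1} - g_l = e_{l+1}$. A first-order Taylor expansion of $\rho$ about $\tilde s_{l+1}$ then gives $\rho(s_{l+1}) - \rho(g_l) \approx \mathrm{diag}(\rho'(\tilde s_{l+1}))\,e_{l+1}$, and substituting into $e_l = t_l - c_l = W^b_{l,l+1}(\rho(s_{l+1})-\rho(g_l))$ produces the recursion
\begin{equation}
e_l \;\approx\; W^b_{l,l+1}\,\mathrm{diag}(\rho'(\tilde s_{l+1}))\,e_{l+1}.
\end{equation}
Imposing $W^b_{l,l+1}=(W^f_{l+1,l})^T$ turns this into the exact backpropagation recursion for $-\beta\,\mathrm{d}\tilde C/\mathrm{d}\rho(\tilde s_l)$.

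To close the argument I would run a downward induction on $l$: the base case is the output layer, where by definition $e_k = -\beta\,\partial C/\partial \rho(s_k) = -\beta\,\mathrm{d}\tilde C/\mathrm{d}\rho(\tilde s_k) + O(\beta^2)$ because $s_k=\tilde s_k + O(\beta)$ and $C=\tilde C$ at $\beta=0$. The inductive step is exactly the recursion derived above. Finally, combining $e_{l+1}\cdot\rho'(s_{l+1})$ with the local presynaptic factor $\rho(s_l)$ in equation~\ref{eq:ma_dw} reproduces the standard backprop weight update on $\tilde C$. The main obstacle I anticipate is the second paragraph: the free-phase hypothesis only guarantees that $V^f_{l,l}$ and $V^b_{l,l}$ match the forward/backward weights \emph{at} the free-phase activities, whereas the weakly-clamped phase probes slightly perturbed activities. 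Justifying that this mismatch contributes only $O(\beta^2)$ requires either assuming that the ghost-unit training has identified $V^f_{l,l}$ and $V^b_{l,l}$ on a neighborhood of the free-phase operating point (a richness/coverage assumption), or bounding the discrepancy via $K$-Lipschitz continuity of $\rho$ and treating the remainder as subleading; this is the step where the ``approximation'' in the theorem genuinely lives, and where the small-$\beta$ hypothesis is crucial.
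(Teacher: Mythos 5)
Your proposal is correct and follows essentially the same route as the paper: identify $V^f_{l,l}$ with $W^f_{l+1,l}$ and $V^b_{l,l}$ with $W^b_{l,l+1}$ from the free-phase convergence, then expand to first order in $\beta$ at the weakly-clamped fixed point to obtain the recursion $e_l \approx W^b_{l,l+1}\,\mathrm{diag}(\rho'(\tilde s_{l+1}))\,e_{l+1}$ and close by downward induction from $e_k = -\beta\,\partial \tilde C/\partial\rho(\tilde s_k)$. The obstacle you flag at the end --- that cancellation at the free-phase activities only pins down $V^f,V^b$ on the patterns actually visited --- is exactly what the paper resolves by explicitly assuming that the space spanned by $\rho(s_l)$ over the training set is large enough to force the matrix identities $V^f_{l,l}\to W^f_{l+1,l}$ and $V^b_{l,l}\to W^b_{l,l+1}$.
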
  

\begin{proof}

Let's introduce $\|A\|$ which is the L2 norm if $A$ is a vector and the maximum singular value norm if $A$ is a matrix.

We suppose that all the matrices are bounded during the procedure.
In other words, we suppose that there exists $M$ such that $\|V^f\|<M$, $\|W^f\|<M$, $\|V^b\|<M$ and  $\|W^b\|<M$. This could be ensured by clipping the weights between extremal values.

We remind the definition of $\tilde{C}$, that corresponds to the cost function of the multilayer perceptron built from the feedforward graph (with no recurrent dynamics) and nodes $\tilde{s}_l$.

\emph{Free phase:}

Firstly, we study the dynamics of the weights during the free phase, where both loss functions $\|t_l - c_l\|$ and $\|s_{l+1} - g_l\|$ are minimized thanks to the updates eq.~\ref{eq:ma_dvf} and eq.~\ref{eq:ma_dvb}.\\

In the stationary limit (where $\dot{s}_l=\dot{g}_l = 0$), we have $s_l=b_l+t_l-c_l$ and so, for all hidden layer $l$:

  \begin{align*}
    \|W^f_{l+1,l}\rho(s_l) - V^f_{l,l}\rho(s_l)\| & = \|s_{l+1} - g_l - t_{l+1} + c_{l+1}\|\\
     & \leq \|s_{l+1} - g_l\|+\|t_{l+1} - c_{l+1}\|
  \end{align*}

So if both $\|t_{l+1} - c_{l+1}\|$ and $\|s_{l+1} - g_l\|$ are minimized for all $l$ in the stationary limit, then $\|W^f_{l+1,l}\rho(s_l) - V^f_{l,l}\rho(s_l)\|$ tends to $0$. Considering that the space spanned by $\rho(s_l)$ during learning is large enough (this means having a large set of training data, which was true in the cases we tested), the mean-squared error of the associated linear regression converges also to $0$, and therefore $V^f_{l,l}\rightarrow W^f_{l+1,l}$ during the free phase.

Secondly, still in the stationary limit,
  \begin{align*}
    \|W^b_{l,l+1}\rho(s_{l+1}) - V^b_{l,l}\rho(s_{l+1})\| & = \|t_l - c_l + V^b_{l,l}(\rho(g_l)-\rho(s_{l+1}))\|\\
     & \leq \|t_l - c_l\|+M\|\rho(g_l)-\rho(s_{l+1})\|\\
     & \leq \|t_l - c_l\|+MK\|s_{l+1} - g_l\|
  \end{align*}
  
because $\rho$ is $K$-Lipschitz.
So if both $\|t_{l+1} - c_{l+1}\|$ and $\|s_{l+1} - g_l\|$ are minimized for all $l$, then $ \|W^b_{l,l+1}\rho(s_{l+1}) - V^b_{l,l}\rho(s_{l+1})\|\rightarrow 0$. As before, if we consider that the space spanned by $\rho(s_{l+1})$ is big enough, we also minimize the mean-squared error of the linear regression, and thus $V^b_{l,l}\rightarrow W^b_{l,l+1}=(W^f_{l+1,l})^T$ during the free phase.\\

\emph{Weakly-clamped phase:}

For the weakly-clamped phase, we prove the theorem by induction over the layers. We suppose that the learning of the free phase is done and so that $V^f_{l,l} = (V^b_{l,l})^T = W^f_{l+1,l}$. 

Firstly, it is easy to see that for the output layer (index $k$), in the stationary limit and with small nudging, we have $s_k=\tilde{s}_k$ (at zero order in $\beta$) and so:

\begin{equation}
    e_k= -\beta \frac{\partial  C}{\partial  \rho(s_k)} = -\beta \frac{\partial  \tilde{C}}{\partial  \rho(\tilde{s}_k)}
\end{equation}

Then we just have to prove that this property is true for the last hidden layer $k-1$ and the rest of the proof will follow by induction.

Considering that the layer $k-1$ is still at equilibrium, and that the nudging was small, we have $s_{k-1}=\tilde{s}_{k-1}$ and so in the stationary limit:
  
   \begin{align*}
    s_{k}&=W^f_{k,k-1}\rho(\tilde{s}_{k-1})+e_k\\
       &=\tilde{s}_k+e_k
  \end{align*}
  Contrarily to just above, we need the first order approximation for $s_k$ in $\beta$ here (otherwise we would get $e_{k-1}=0$ in the following).
  
  As we have $V^f_{k-1,k-1}=W^f_{k,k-1}$ and $s_{k-1}=\tilde{s}_{k-1}$, 
    \begin{align*}
    g_{k-1}&=V^f_{k-1,k-1}\rho(\tilde{s}_{k-1})\\
    &=W^f_{k,k-1}\rho(\tilde{s}_{k-1})\\
       &=\tilde{s}_k
  \end{align*} 
  
 Starting from the definition of $e_{k-1}$, we substitute the definition of $t_{k-1}$ and $c_{k-1}$. Using that $V^b_{k-1,k-1}=W^b_{k-1,k}$ (from free phase) and $g_{k-1}=\tilde{s}_k$, we have:
  \begin{align*}
    e_{k-1} &= t_{k-1} - c_{k-1} \\
        &= W^b_{k-1,k} \rho(s_k) - V^b_{k-1,k-1} \rho(g_{k-1})\\
        &= W^b_{k-1,k} \rho(s_k) - W^b_{k-1,k} \rho(g_{k-1})\\
        &= W^b_{k-1,k}(\rho(\tilde{s}_k+e_k) -  \rho(\tilde{s}_k))\\
\end{align*}
Then, by assuming $e_k$ is small (because $\beta$ is small):
\begin{align}
\label{eq:ei}
     e_{k-1} &=W^b_{k-1,k}(\rho'(\tilde{s}_k)\cdot e_k + o(e_k))\\
     e_{k-1}[i]  &\approx \sum_{j \in \mathcal{S}_k} W^b_{k-1,k}[i,j] \rho'(\tilde{s}_k[j])e_k[j]
  \end{align}
  
By the chain rule on the feedforward graph, we have:
\begin{equation}
\label{eq:chainrule}
    \sum_{j \in \mathcal{S}_k} W^f_{k,k-1}[j,i] \rho'(\tilde{s}_k[j]) \frac{\diff  \tilde{C}}{\diff  \rho(\tilde{s}_k[j])} = \frac{\diff  \tilde{C}}{\diff  \rho(\tilde{s}_{k-1}[i])}
\end{equation}

From eq.~\ref{eq:ei}, eq.~\ref{eq:chainrule} and $W^b_{k-1,k}=(W^f_{k,k-1})^T$:
  \begin{align}
    e_{k-1} \approx  -\beta\frac{\diff  \tilde{C}}{\diff  \rho(\tilde{s}_{k-1})}
  \end{align}

Using induction across layers, we have for every layer $l$:
  \begin{align}
    e_l \approx -\beta \frac{\diff  \tilde{C}}{\diff  \rho(\tilde{s}_l)}.
  \end{align}
\end{proof}  
Due to the stacked nonlinearities the approximation for deeper layers may get worse and worse as we go deeper, this can be compensated by choosing smaller $\beta$.

\begin{minipage}{0.5\textwidth}
\begin{algorithm}[H]
 \textbf{initialization}\\
 \While{not done}{
 Sample batch from the training set \\
  \For{\_ in range free\_steps}{
   for output units $e_k=0$\\
   for hidden units $e_l=W_{l,l+1}^b\rho(s_{l+1})-V_{l,l}^b\rho(g_l)$\\
   $\forall$ layer $l$,\\
   $s_l \leftarrow s_l+\frac{\diff t}{\tau}[-s_l+W_{l,l-1}^f\rho(s_{l-1})+e_l]$\\
   $g_l \leftarrow g_l+\frac{\diff t}{\tau}[-g_l+V_{l,l}^f\rho(s_l)]$\\
   $V_{l,l}^f \leftarrow V^f_{l,l}+\eta_V\diff t[(s_{l+1} - g_l) \rho(s_l)^T]$\\
    $V_{l,l}^b \leftarrow V^b_{l,l}+\eta_V\diff t [e_l \rho(g_l)^T]$\\
   }
  \For{\_ in range weakly\_clamped\_steps}{
   for output units $e_k=-\beta\frac{\diff  C}{\diff  \rho(s_k)}$\\
   for hidden units $e_l=W_{l,l+1}^b\rho(s_{l+1})-V_{l,l}^b\rho(g_l)$\\
   $\forall$ layer $l$, \\
   $s_l \leftarrow s_l+\frac{\diff t}{\tau}[-s_l+W_{l,l-1}^f\rho(s_{l-1})+e_l]$\\
   $g_l \leftarrow g_l+\frac{\diff t}{\tau}[-g_l+V_{l,l}^f\rho(s_l)]$\\
   $W_{l,l-1}^f \leftarrow W_{l,l-1}^f+\eta_W\diff t \left(e_{l}\cdot \rho'(s_{l})\right) \rho(s_{l-1})^T$\\
    $W_{l-1,l}^b \leftarrow (W_{l,l-1}^f)^T$\\
   }   
 }
 \captionof{algocf}{Learning procedure for Model A \newline (MA) network.}
 \label{algo:ma}
\end{algorithm}
\end{minipage}
\begin{minipage}{0.5\textwidth}
\begin{algorithm}[H]
\textbf{initialization}\\
 \While{not done}{
 Sample batch from the training set \\
  \For{\_ in range free\_steps}{
   for output units $e_k=0$\\
   for hidden units $e_l=W_{l,l+1}^b\rho(s_{l+1})-V_{l,l}^b\rho(g_l)$\\
   $\forall$ layer $l$,\\
   $s_l \leftarrow s_l+\frac{\diff t}{\tau}[-s_l+W_{l,l-1}^f\rho(s_{l-1})+e_l]$\\
   $g_l \leftarrow g_l+\frac{\diff t}{\tau}[-g_l+V_{l,l}^f\rho(s_l)]$\\
   $V_{l,l}^b \leftarrow V^b_{l,l}+\eta_V\diff t [e_l \rho(g_l)^T]$\\
   }
  \For{\_ in range weakly\_clamped\_steps}{
   for output units $e_k=-\beta\frac{\diff  C}{\diff  \rho(s_k)}$\\
   for hidden units $e_l=W_{l,l+1}^b\rho(s_{l+1})-V_{l,l}^b\rho(g_l)$\\
   $\forall$ layer $l$,\\
   $s_l \leftarrow s_l+\frac{\diff t}{\tau}[-s_l+W_{l,l-1}^f\rho(s_{l-1})+e_l]$\\
   $g_l \leftarrow g_l+\frac{\diff t}{\tau}[-g_l+V_{l,l}^f\rho(s_l)]$\\
   }   
   {
    $W_{l,l-1}^f \leftarrow W_{l,l-1}^f+\eta_{W,l} \left(e_{l}\cdot \rho'(s_{l})\right) \rho(s_{l-1})^T$\\
    $W_{l-1,l}^b \leftarrow (W_{l,l-1}^f)^T$\\
  }
 }
 \captionof{algocf}{Learning procedure for model B \newline (MB) network.}
 \label{algo:mb}
\end{algorithm}
\end{minipage}

\begin{corollary}
\label{cor:ma_bp}
  Under the assumptions of Theorem~\ref{thm:ma_bp}, the weight change proposed in eq.~\ref{eq:ma_dw} corresponds to approximate stochastic gradient descent, i.e.,
  \begin{align}
    \dot{W}^f_{k-1,k} \approx -\eta_W\beta\frac{\diff  \tilde{C}}{\diff  W^f_{k,k-1}}
  \end{align}
\end{corollary}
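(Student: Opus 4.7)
The plan is to chain together Theorem~\ref{thm:ma_bp} with the chain rule on the MLP graph, since the corollary is really just an unpacking of what the theorem already guarantees. Theorem~\ref{thm:ma_bp} gives two ingredients for free: (i) at zero order in $\beta$, the recurrent network state agrees with the MLP state, $s_l \approx \tilde{s}_l$, so in particular $\rho'(s_{l+1}) \approx \rho'(\tilde{s}_{l+1})$ and $\rho(s_l) \approx \rho(\tilde{s}_l)$; and (ii) at first order, $e_{l+1} \approx -\beta\,\frac{\diff \tilde{C}}{\diff \rho(\tilde{s}_{l+1})}$. I would plug both into the learning rule~\eqref{eq:ma_dw}.

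Concretely, I would substitute to obtain, to leading order in $\beta$,
\begin{align*}
\dot{W}^f_{l+1,l} \;=\; \eta_W\,\bigl(e_{l+1}\cdot\rho'(s_{l+1})\bigr)\,\rho(s_l)^T
\;\approx\; -\eta_W\beta\,\Bigl(\tfrac{\diff \tilde{C}}{\diff \rho(\tilde{s}_{l+1})}\cdot \rho'(\tilde{s}_{l+1})\Bigr)\,\rho(\tilde{s}_l)^T .
\end{align*}
Then I would apply the chain rule on the feedforward (MLP) graph, noting that $\tilde{s}_{l+1} = W^f_{l+1,l}\rho(\tilde{s}_l)$, so $\frac{\diff \tilde{C}}{\diff \tilde{s}_{l+1}} = \frac{\diff \tilde{C}}{\diff \rho(\tilde{s}_{l+1})}\cdot \rho'(\tilde{s}_{l+1})$ and $\frac{\diff \tilde{C}}{\diff W^f_{l+1,l}} = \frac{\diff \tilde{C}}{\diff \tilde{s}_{l+1}}\,\rho(\tilde{s}_l)^T$. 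Combining these two identifications gives $\dot{W}^f_{l+1,l} \approx -\eta_W\beta\,\frac{\diff \tilde{C}}{\diff W^f_{l+1,l}}$, which is the statement of the corollary (modulo what appears to be an index-swap typo between $k{-}1,k$ and $k,k{-}1$ in the displayed equation).

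The step I expect to need the most care is tracking the order in $\beta$. The factor $\rho'(s_{l+1})\rho(s_l)^T$ in the update rule must be evaluated at the free-phase fixed point, i.e.\ at zeroth order in $\beta$, because $e_{l+1}$ already carries one factor of $\beta$; any $O(\beta)$ correction to $\rho'(s_{l+1})$ or $\rho(s_l)$ would only contribute at $O(\beta^2)$ and so falls into the approximation error. I would state this explicitly so that the ``$\approx$'' in the corollary is understood as equality up to $O(\beta^2)$, consistent with the same convention used in the proof of Theorem~\ref{thm:ma_bp}. Beyond that bookkeeping, the argument is a direct corollary: no new dynamical analysis is needed, only a rewriting of the local Hebbian-like rule in the language of the MLP gradient.
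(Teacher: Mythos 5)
Your proposal is correct and follows essentially the same route as the paper: apply the chain rule on the MLP graph to write $\frac{\diff \tilde{C}}{\diff W^f_{l+1,l}[i,j]}$ as $\frac{\diff \tilde{C}}{\diff \rho(\tilde{s}_{l+1}[i])}\rho'(\tilde{s}_{l+1}[i])\rho(\tilde{s}_l[j])$ and then substitute $e_{l+1}\approx-\beta\frac{\diff \tilde{C}}{\diff \rho(\tilde{s}_{l+1})}$ from Theorem~\ref{thm:ma_bp} into the update rule. Your explicit remark that $\rho'(s_{l+1})$ and $\rho(s_l)$ need only be matched to their MLP values at zeroth order in $\beta$ (so the discrepancy is $O(\beta^2)$) is a bookkeeping point the paper leaves implicit, and your observation about the index-swap typo in the displayed equation is also accurate.
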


\begin{proof}
  \begin{align}
    \frac{\diff  \tilde{C}}{\diff  W^f_{l,l-1}[i,j]} &=
    \frac{\diff  \tilde{C}}{\diff  \rho(\tilde{s}_l[i])}\frac{\diff  \rho(\tilde{s}_l[i])}{\diff  \tilde{s}_l[i]}\frac{\diff  \tilde{s}_l[i]}{\diff  W^f_{l,l-1}[i,j]}
    \nonumber \\
    &= \frac{\diff  \tilde{C}}{\diff  \rho(\tilde{s}_l[i])} \rho'(\tilde{s}_l[i]) \rho(\tilde{s}_{l-1}[j])
  \end{align}
  Hence, if $e_l \approx -\beta \frac{\diff  \tilde{C}}{\diff  \rho(\tilde{s}_l)}$,
  we obtain that $e_l[i] \rho'(s_l[i]) \rho(s_{l-1}[j]) \approx -\beta\frac{\diff  \tilde{C}}{\diff  W^f_{l,l-1}[i,j]}$ and the corollary.
\end{proof}

\subsection{Deep neural network with ghost units replicating online the feedback from the pyramidal units (MB)}
\label{subsec:mb}
\paragraph{Model description}
\label{subsubsec:mb_desc}

We also developed a different class of models that we introduce in this section.  In this model (MB), we do not make any hypothesis on the number of pyramidal units and ghost units. We also consider that the lateral connections $V^f_{l,l}$ from the pyramidal units of layer $l$ to the ghost units of the same layer are fixed to a randomly initialized value, therefore $\dot{V}^f_{l,l} = 0$. $V^b$ evolves so as to have, example by example, the feedback $c_l$ coming from the ghost units of layer $l$ replicating the feedback $t_l$ coming from the pyramidal units of layer $l+1$, see Figure~\ref{fig:mb_architecture}. Thanks to this property, we have $c_l=t_l$ for a given example at the end of the free phase after the efficient and rapid learning of $V^b$. This enables the network to correctly learn $W^f$ in the weakly-clamped phase.\\
This highly modular and fast changing plasticity could be implemented in real neural circuits by Post-Tetanic Potentiation. This type of plasticity evolves rapidly and only lasts on a time scale of seconds~\cite{storozhuk2002post, xue2010post}.\\

As just described, the top-down feedback $t_l$ onto layer $l$ acts as a target for the weights $V^b_{l,l}$ forming the canceling lateral feedback $c_l$. Therefore, the weights are updated during the free phase as follow:
\begin{align}
\dot{V}^b_{l,l} = \eta_V(t_l - c_l) \rho(g_l)^T
\label{eq:mb_dvb}
\end{align}
which minimizes $\|t_l - c_l\|$.

The main weights $W^f_{l+1,l}$ are updated at the end of the weakly-clamped phase through the same local rule as in (MA):
\begin{align}
    \Delta W^f_{l+1,l} = \eta_{W,l} \left(e_{l+1}\cdot \rho'(s_{l+1})\right) \rho(s_l)^T.
    \label{eq:mb_dw}
\end{align}
We used different learning rates $\eta_{W,l}$ for each layer $l$ in the case of (MB).

For a more detailed description of the algorithm, see Algorithm~\ref{algo:mb}.

\paragraph{A good approximation of backpropagation}
These update rules lead to the following theorem, where eq.~\ref{eq:mb_dw} can be again seen as close to backpropagation.

\begin{figure}[ht]
    \centering
  \includegraphics[width=.8\linewidth]{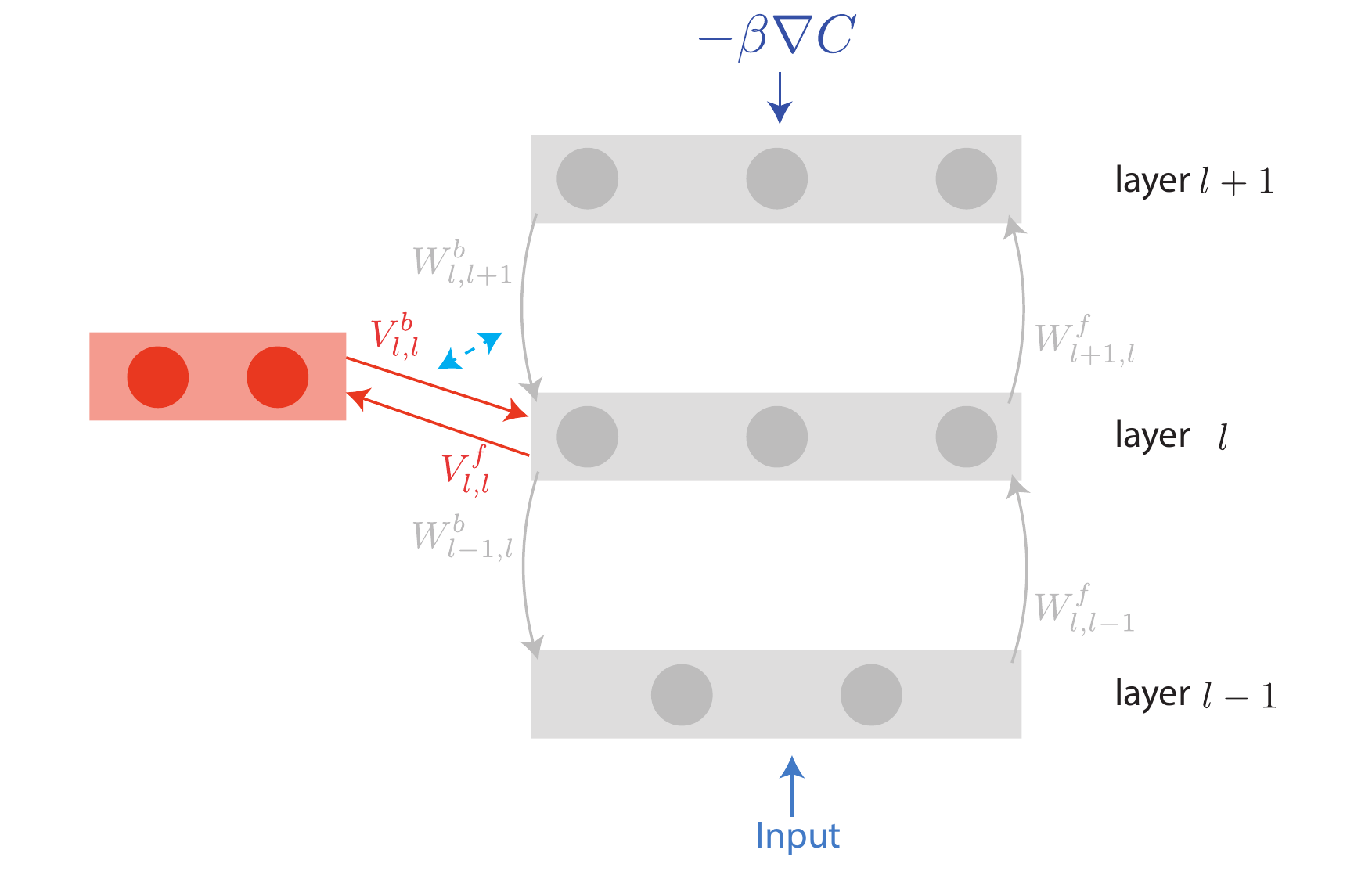}
\caption{Architecture for Model B (MB) network. Pyramidal units (in grey) and ghost units (in orange) are connected through different weights matrices ($W^f$, $W^b$ in grey for PU-PU connectivity and $V^f$, $V^b$ in orange for PU-GU lateral connections). $V^f$ is fixed. $V^b$ thanks to the plasticity rules eq.~\ref{eq:mb_dvb} evolves so that the lateral feedback from the ghost units of layer $l$ replicates the pyramidal units feedback of the following layer $l+1$ (this interaction is represented by the blue dotted line). The nudging (-$\beta \nabla C$) is presented at the output layer ($\beta=0$ in the free phase and $\beta \neq 0$ in the weakly-clamped phase).}
\label{fig:mb_architecture}
\end{figure}

\begin{theorem}
\label{thm:mb_bp}  
We set the backward weights equal to the (adapting) forward weights,  $W^b_{l,l+1}=(W^f_{l+1,l})^T$, and assume that the ghost unit circuit ($V^b$) converged during the free phase (at each different presentation of inputs), $\|t_l - c_l\|\rightarrow 0$ for each hidden layer $l$ (MB). Then for weak output nudging ($\beta$ small, $e_k= -\beta \frac{\partial  C}{\partial  \rho(s_k)}$) errors converge to $e_l \approx -\beta \frac{\diff  \tilde{C}}{\diff  \rho(\tilde{s}_l)}$ for each hidden layer $l$. So, the forward weights ($W^f$) become updated according to the classical backpropagated error gradient. 
\end{theorem}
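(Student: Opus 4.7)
}

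My plan is to mirror the structure of the proof of Theorem \ref{thm:ma_bp} but adapt to the weaker hypothesis on the ghost circuit: in (MB) we only have a \emph{functional} identity per input, not the pointwise equalities $V^f=W^f$ and $V^b=(W^f)^T$.

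First I would extract what the free-phase assumption buys us. Since $e_l=0$ in the free phase, the stationary limit gives $s_l=\tilde s_l$ and $g_l=V^f_{l,l}\rho(\tilde s_l)=:\tilde g_l$. The hypothesis $\|t_l-c_l\|\to 0$ (enforced per example by the update \eqref{eq:mb_dvb}) then yields, for the current input,
\begin{equation*}
V^b_{l,l}\,\rho(\tilde g_l)\;=\;W^b_{l,l+1}\,\rho(\tilde s_{l+1}).
\end{equation*}
This replaces the much stronger identity $V^b=W^b$ used in Theorem \ref{thm:ma_bp}.

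Next I would analyse the weakly-clamped phase by linearising around the free-phase equilibrium. Write $\delta s_l:=s_l^{wc}-\tilde s_l$ and $\delta g_l:=g_l^{wc}-\tilde g_l$; both are $O(\beta)$. The equilibrium conditions give $\delta g_l = V^f_{l,l}\rho'(\tilde s_l)\delta s_l + O(\beta^2)$ and $\delta s_l = e_l + W^f_{l,l-1}\rho'(\tilde s_{l-1})\delta s_{l-1} + O(\beta^2)$. For the base case $l=k$, as in Theorem \ref{thm:ma_bp}, $e_k=-\beta\,\partial C/\partial\rho(s_k)=-\beta\,\partial\tilde C/\partial\rho(\tilde s_k)+O(\beta^2)$. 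For the inductive step, Taylor-expanding $t_l$ and $c_l$ and using the free-phase identity to kill the zero-order terms gives
\begin{equation*}
e_l \;\approx\; W^b_{l,l+1}\rho'(\tilde s_{l+1})\,\delta s_{l+1}\;-\;V^b_{l,l}\rho'(\tilde g_l)\,V^f_{l,l}\rho'(\tilde s_l)\,\delta s_l,
\end{equation*}
into which I substitute $\delta s_{l+1}=e_{l+1}+W^f_{l+1,l}\rho'(\tilde s_l)\delta s_l$. Grouping terms, the coefficient of $e_{l+1}$ is $W^b_{l,l+1}\rho'(\tilde s_{l+1})$, which is exactly the backprop propagator; invoking the inductive hypothesis and the chain rule (as in eq.~\eqref{eq:chainrule}) yields the desired $e_l\approx -\beta\,d\tilde C/d\rho(\tilde s_l)$, provided the remaining coefficient
\begin{equation*}
\bigl[W^b_{l,l+1}\rho'(\tilde s_{l+1})W^f_{l+1,l}-V^b_{l,l}\rho'(\tilde g_l)V^f_{l,l}\bigr]\rho'(\tilde s_l)\,\delta s_l
\end{equation*}
is negligible.

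The main obstacle is precisely showing that this bracketed correction vanishes (at first order in $\beta$). This is where the proof genuinely differs from Theorem \ref{thm:ma_bp}. My plan is to differentiate the free-phase identity as the input $x$ varies across the training set: the relation $V^b_{l,l}\rho(V^f_{l,l}\rho(\tilde s_l(x)))=W^b_{l,l+1}\rho(W^f_{l+1,l}\rho(\tilde s_l(x)))$ must hold for every $x$, so its Jacobian with respect to $\tilde s_l$ must also match. Provided (i) the per-example adaptation of $V^b$ extends consistently across neighbouring inputs and (ii) the span of $\rho'(\tilde s_l(x))\,d\tilde s_l/dx$ is rich enough (the same ``large training set'' richness assumption already invoked in the free-phase part of the proof of Theorem \ref{thm:ma_bp}), one recovers the matrix identity $V^b_{l,l}\rho'(\tilde g_l)V^f_{l,l}=W^b_{l,l+1}\rho'(\tilde s_{l+1})W^f_{l+1,l}$, which kills the bracket. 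With the bracket gone, the induction closes exactly as in Corollary \ref{cor:ma_bp}, and the forward weight update \eqref{eq:mb_dw} again coincides, to leading order in $\beta$, with a step of stochastic gradient descent on $\tilde C$. I would expect the bulk of the write-up to be occupied by justifying this Jacobian identity cleanly, since everything else is routine linearisation and induction.
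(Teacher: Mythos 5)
Your skeleton is the same as the paper's: extract the per-example free-phase identity $V^b_{l,l}\rho(\tilde g_l)=W^b_{l,l+1}\rho(\tilde s_{l+1})$ (the paper's eq.~\ref{eq:equMB}), expand $\rho(\tilde s_{k}+e_{k})$ to first order in $\beta$ so that the zero-order terms cancel against the ghost-unit feedback, and close by induction with the chain rule exactly as in Theorem~\ref{thm:ma_bp}. Where you diverge is in the treatment of the lower layer's own $O(\beta)$ shift. The paper simply posits that layer $k-1$ ``is still at equilibrium'' with $s_{k-1}=\tilde s_{k-1}$ and that $g_{k-1}$ ``has not been impacted by the feedback from above,'' i.e.\ it sets $\delta s_{k-1}=\delta g_{k-1}=0$ when evaluating $t_{k-1}-c_{k-1}$; under that quasi-static assumption your bracketed correction term never appears and the proof is three lines. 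You instead track $\delta s_l$ and $\delta g_l$ consistently to first order, which correctly surfaces the term $\bigl[W^b_{l,l+1}\rho'(\tilde s_{l+1})W^f_{l+1,l}-V^b_{l,l}\rho'(\tilde g_l)V^f_{l,l}\bigr]\rho'(\tilde s_l)\,\delta s_l$ — a genuine $O(\beta)$ contribution, of the same order as the term you keep, that the paper silently discards. In (MA) this bracket vanishes identically because $V^f=W^f$, $V^b=(W^f)^T$ and $\tilde g_l=\tilde s_{l+1}$; in (MB) it does not, so your observation is a real refinement of the paper's argument, not a detour.

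The gap is in how you propose to kill that bracket. Differentiating the free-phase identity with respect to the input $x$ does not yield the Jacobian identity $V^b_{l,l}\rho'(\tilde g_l)V^f_{l,l}=W^b_{l,l+1}\rho'(\tilde s_{l+1})W^f_{l+1,l}$, for two concrete reasons. First, in (MB) the matrix $V^b_{l,l}$ is re-fitted \emph{for each example} by the fast plasticity of eq.~\ref{eq:mb_dvb}, so the identity holds with an $x$-dependent $V^b_{l,l}(x)$; differentiating in $x$ produces an uncontrolled extra term $\bigl(\mathrm{d}V^b_{l,l}/\mathrm{d}x\bigr)\rho(\tilde g_l)$, and your assumption (i) that the adaptation ``extends consistently across neighbouring inputs'' is precisely what fails in this model. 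Second, the identity is generically impossible on rank grounds: (MB) deliberately uses fewer ghost units than upper-layer pyramidal units (5 ghost units against 10 output units in the experiments), so the left-hand side has rank at most the number of ghost units while the right-hand side need not. You should therefore either adopt the paper's explicit quasi-static assumption ($s_l$ and $g_l$ of the hidden layers retain their free-phase values at zero order when the error is read out), which makes the bracket multiply $\delta s_l=0$ and reduces your argument to the paper's, or keep the bracket as an additional first-order error term and state the conclusion as $e_l=-\beta\,\mathrm{d}\tilde C/\mathrm{d}\rho(\tilde s_l)+O(\beta)\cdot\varepsilon_l$ with $\varepsilon_l$ controlled by how well the ghost circuit matches the feedback Jacobian, not just its value.
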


\begin{proof}
 \emph{Free phase:}
 
  After settling in the free phase ($\dot{s}_l=\dot{g}_l=0$), we have, because $c_l=t_l$:
  \begin{align*}
    s_l = b_l =  W^f_{l,l-1} \rho(s_{l-1})
  \end{align*}
  
  In particular, we have for all units $s_l=\tilde{s}_l$ and, as $c_l=t_l$:
    \begin{equation}
    \label{eq:equMB}
        V^b_{l,l} \rho(g_{l})=W^b_{l,l+1} \rho(\tilde{s}_{l+1})
    \end{equation}  

 \emph{Weakly-clamped phase:}

As in the previous proof, we use induction.

We have clearly that in the output layer $k$:
\begin{equation}
    e_k= -\beta \frac{\partial  C}{\partial  \rho(s_k)} = -\beta \frac{\partial  \tilde{C}}{\partial  \rho(\tilde{s}_k)}
\end{equation}

We will prove the property for the last hidden layer $k-1$, and induction will follow.
Starting from the definition of $e_{k-1}$, we substitute the definition of $t_{k-1}$ and $c_{k-1}$:
  \begin{align}
    e_{k-1} &= t_{k-1} - c_{k-1}  \\
        &= W^b_{k-1,k} \rho(s_k) - V^b_{k-1,k-1} \rho(g_{k-1})
    \label{eq:eiMB}
  \end{align}
  We consider that the layer $k-1$ is still at equilibrium, and that the nudging is small, so $s_{k-1}=\tilde{s}_{k-1}$. And by the same arguments than for (MA):
  \begin{equation}
  \label{eq:mb2}
      s_k=\tilde{s}_k+e_k
  \end{equation}

From eq.~\ref{eq:equMB} (still valid because $g_{k-1}$ has not been impacted by the feedback from above), eq.~\ref{eq:eiMB}, eq.~\ref{eq:mb2} and $e_k$ small, we have: 
  \begin{align}
    e_{k-1} &=  W^b_{k-1,k} \rho(\tilde{s}_k+e_k) - V^b_{k-1,k-1} \rho(g_{k-1})\\
    &\approx W^b_{k-1,k} \rho'(\tilde{s}_k)e_k + W^b_{k-1,k} \rho(\tilde{s}_k) - V^b_{k-1,k-1} \rho(g_{k-1}) \\
    e_{k-1}[i] &\approx \sum_{j \in \mathcal{S}_k} W^b_{k-1,k}[i,j] \rho'(\tilde{s}_k[j])e_k[j]
  \end{align}
 Starting from this point, the proof is the same as for Theorem~\ref{thm:ma_bp}.
 \end{proof}
 
 \subsection{Transpose feedback (TF) versus Feedback-alignment (FA)}

The feedback weights $W^b$ are assumed to be equal to the transpose of the feedforward ones, and are updated as such during the training: $W^b_{l,l+1}=(W^f_{l+1,l})^T$, as in classical backpropagation. We refer to this hypothesis as transpose-feedback (TF). In practice this characteristic could be implemented using an additional reconstruction cost (between consecutive pyramidal layers), which has been shown to encourage symmetry of the weights~\cite{Vincent-JMLR-2010-small}.  This assumption can also be relaxed thanks to~\cite{lillicrap2016random} and the feedback-alignment effect (FA). In this case, feedback weights $W^b$ are fixed and randomly initialized. During learning, the feedforward matrix tends to align with the transpose of the feedback matrix. Both hypotheses (TF) and (FA) were tested here.

\section{Related Work}
\label{sec:related_work}

Backpropagation in the brain has been a very active topic of research for the last few years and various models have been proposed.\\

Constrastive Hebbian learning~\cite{Ackley85,Hinton+McClelland-1988} introduced the idea of learning in two different phases, a free phase where the inputs are presented to the network, followed by a weakly-clamped phase, with a target signal that nudges the output layer towards the right solution. \cite{scellier2017equilibrium} made the parallel between contrastive Hebbian learning and backpropagation with the definition of a framework for energy-based models, Equilibrium Backpropagation. The idea of using two different phases during the training procedure was kept in this work, however, contrarily to the previous studies, we were also able to train the network while allowing synaptic updates during both phases.\\

Segregated dendrites and multicompartment neurons were recently used~\cite{guerguiev2017towards} to implement backpropagation in a biologically plausible manner. This study gave a very interesting explanation of how neurons can store feedforward activity and how feedback connections can carry the backpropagated error without interfering with the feedforward activity. Training can then be performed without dealing with recurrent activity caused by feedback connections, which makes the theory simpler and closer to deep learning methodology. This study achieved great results even if they were using a spiking neural network with update rules being computed using average of the neural potential.\\

A recent study~\cite{sacramento2018dendritic,NIPS2018_8089} introduced the idea of canceling the feedback from the next layer with the inhibitory lateral feedback in order to leave out only the backpropagated error as remaining from the feedback signal. They used recurrent networks of two-compartment neurons. Links can also be drawn with~\cite{Lee:2015:DTP:3120485.3120521,jaderberg2016decoupled,DBLP:journals/corr/abs-1803-01834} where local credit assignment is also performed.\\

As in [17, 18, 19], we effectively consider the dynamics of a single quantity per neuron, the somatic activity. However we do not describe the input currents as coming from different compartments and consider instead a more abstract single-compartment neuron (that, to implement plasticity, is able to represent two quantities, the target rate and its actual rate). This has the advantage of simplifying the terminology of the model as we do not need to introduce dendritic quantities that enter in the representation of the errors. Moreover, this does not induce any scaling of the approximated error by dendritic attenuation factors as in~\cite{sacramento2018dendritic,NIPS2018_8089} (the same scaling can be recovered by multiplying the learning rate by the inverse of the dendritic attenutation factor).  As such, it is possible to approximate the backpropagated gradient without leading to an exponential decay of its magnitude when it is propagated through several layers. We used a reduced system with only the required ingredients in order to obtain a working biologically abstracted analogue of backpropagation. Model A implements in a simple and condensed way the principles from~\cite{sacramento2018dendritic,NIPS2018_8089} where the ghost units network copy the pyramidal one. Diverging from the ideas of Model A, we also postulate a short-term plasticity according to which the local circuit adapts for a single pattern such as in Post-Tetanic Potentiation~\cite{storozhuk2002post, xue2010post} or in the FORCE algorithm~\cite{sussilloabbott}. We develop accordingly Model B where the ghost units dynamically adapt their feedback to replicate in an online manner the feedback coming from the pyramidal units. This single compartment model is sufficient in order to obtain the required credit assignment mechanism and it also simplifies the mathematics to the bare necessities required to obtain the desired results.

\setlength{\abovecaptionskip}{0pt}
\begin{figure}[H]
\centering
\includegraphics[width=0.7\linewidth]{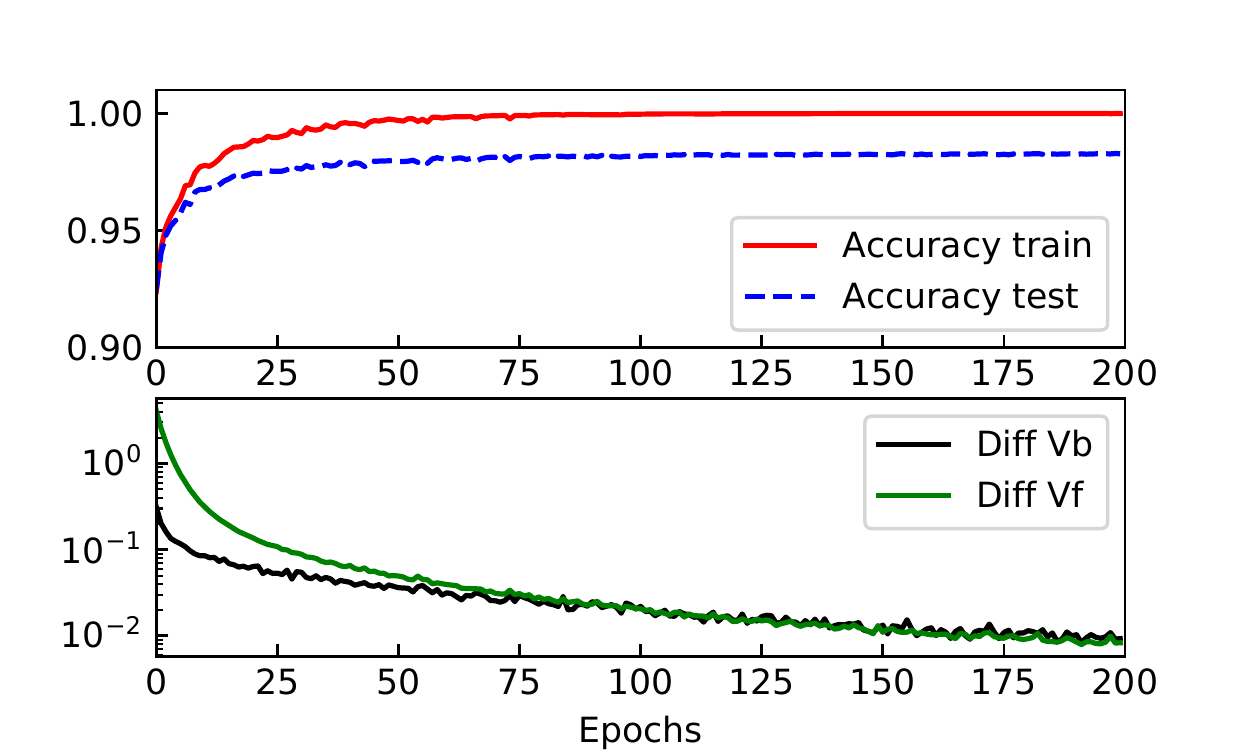}
\caption{Top: Evolution of the train accuracy (dotted) and test accuracy (plain line) on the MNIST classification task for a 500-unit (MA) neural network. Bottom: Evolution across learning of the Frobenius norm between $V^b$ and $W^b$ (Diff Vb) and between $V^f$ and $W^f$ (Diff Vf). Curves are averaged over 5 experiments.}
\label{fig:ma_mean}
\end{figure}

\section{Results}
\label{sec:results}

\subsection{Credit assignment with replicating units}
\label{subsec:credit}

We consider a (784-500-10) network with one hidden layer with MSE (mean-squared error) loss. No preconditioning of the inputs is used. Batch size is equal to 100 for (MA) and 1 for (MB). The activation is sigmoid. We train on the 55000 MNIST training set and test on the 10000 examples of the test set. We initialized the weights randomly with a uniform distribution over $[-\gamma,\gamma]$ (Table~\ref{table:ma_parameters} and~\ref{table:mb_parameters}).

\paragraph{(MA) dynamics}
\label{subsec:ma_dynamics}

In (MA), learning is composed of two phases. During the free phase, only inputs are provided to the network. The weights $V^f$ push the ghost units to mimic their corresponding pyramidal units (eq.~\ref{eq:ma_dvf}) while having $V^b$ learning to minimize the mismatch between the feedback coming from the ghost units and the pyramidal units from the next layer (eq.~\ref{eq:ma_dvb}). This pushes the matrix $V^f$ to reproduce $W^f$ and $V^b$ to copy $W^b$. This can be seen at the bottom of Figure~\ref{fig:ma_mean}, where the Frobenius norms between these matrices during training are reproduced.

This leads to the correct computation of the feedforward path because the feedback terms cancel each other, despite happening in a dynamical way.

During the weakly-clamped phase, the output units are nudged towards the correct values. This shift is backpropagated through the dynamics of the network. This gives rise to an error term at each hidden layer thanks to the mismatch between the feedback coming from the pyramidal units and the corresponding ghost units. $W^f$ evolves in order to minimize this mismatch (eq.~\ref{eq:ma_dw}).

As can be seen at the top of Figure~\ref{fig:ma_mean}, the network learns well to classify MNIST digits (100\% on the train set and 98.27\% on the test set). It generalizes well without any regularization or tricks. 

\paragraph{(MB) dynamics}
\label{subsec:mb_dynamics}

We also studied learning in a neural network following (MB) hypothesis. In the one hidden layer network, 5 ghost units were used which aim to replicate the feedback signal from the 10 output pyramidal units. For the testing of the feedforward network on the train and test MNIST sets, we ran the forward graph without the dynamical part to have quicker simulations.

Different inputs are presented to the network sequentially (no batch in this setting). At each input presentation, the network goes through two different phases (see~Figure~\ref{fig:mb_evolution}).
First, the free phase (in blue) where there is no nudging of the output layer. In particular, the feedback weights adapt their synaptic variables to have the feedback signal from the ghost unit cancel the feedback from the pyramidal cells ($e_l=0$), as can be seen in~Figure~\ref{fig:mb_evolution} (bottom). This cancellation of the local error leads to a correct computation of the feedforward graph of the neural network. Output probabilities of the classification task can then be read from the output pyramidal cells as seen in~Figure~\ref{fig:mb_evolution} (top). 
Then, during the weakly-clamped phase (in green), the output neurons are nudged toward the right solution. This error is then backpropagated through the network by its own dynamics. When the equilibrium is reached, the feedforward weights $W^f$ are updated. Another input is then presented to the network and as such, this process enables learning of the classification task.

\begin{figure}[ht]
\centering
\includegraphics[width=0.7\linewidth]{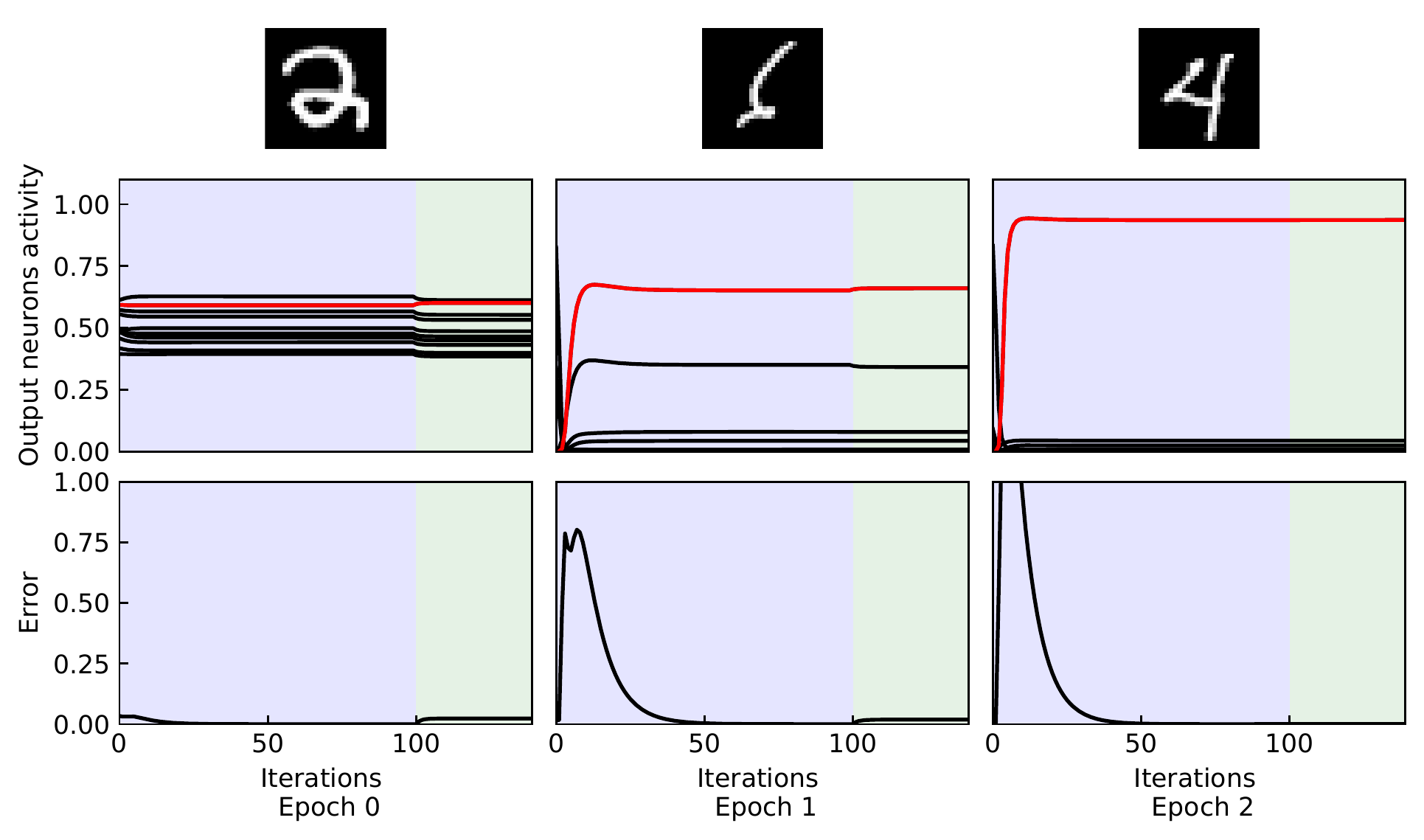}
\caption{Dynamics of a neural network (MB) for 3 different inputs at 3 different times in learning. Level of neuronal activity of the output neurons represented in the top panel (red for the correct class, black for the others). Norm of the difference between the feedback from the ghost units and the pyramidal ones (bottom) as a function of the time step. The free phase is represented in blue and the weakly-clamped phase in green.}
\label{fig:mb_evolution}
\end{figure}

Learning can be studied through the responses of the output neurons at different epochs. At epoch 0 (Figure~\ref{fig:mb_evolution} (left)), the output neurons are mainly wrong. The gradients that are backpropagated have a large amplitude (as can be seen with the jumps in activity after the beginning of the weakly-clamped phase). In particular, we clearly see that the neuron representing the right class (in red) is nudged towards 1 whereas the others are nudged down to 0. After one epoch (Figure~\ref{fig:mb_evolution} (middle)), the output states are moving towards the right solution. However it becomes harder to cancel the error in the free phase, because the weights grow bigger, making the ghost units work harder after a switch between two different inputs. Finally, after two epochs (Figure~\ref{fig:mb_evolution} (right)), the output neurons already start to saturate to 0 or 1. In conclusion, using (MB) dynamics, the neural network is able to quickly learn a classification task.

\begin{figure}[ht]
\centering
\includegraphics[width=.5\linewidth]{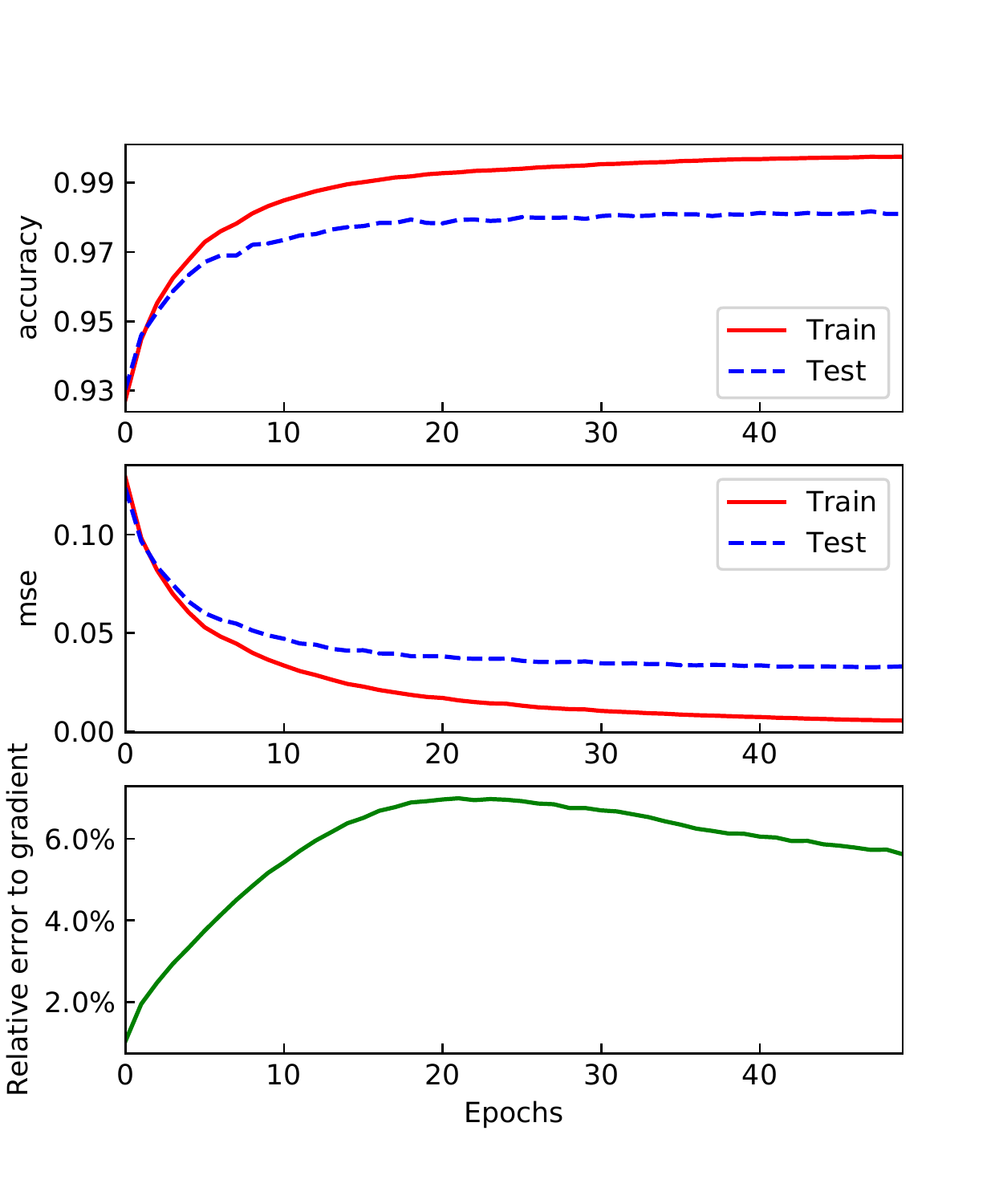}
\caption{Learning in a 500-unit (MB) neural network. Test and training accuracy (top), test and training mean-squared error (middle), relative error to the gradient from classical backpropagation (bottom) as a function of the number of epochs.}
\label{fig:mb_mean}
\end{figure}

On a longer scale of learning (50 epochs), the accuracy and mean-squared error are plotted in~Figure~\ref{fig:mb_mean} for both train and test sets. In particular learning goes well as the training accuracy reach 0.9976 (top). This is correlated with the mean-squared error that goes down and tends to 0 (middle). The gradients computed through the ghost units are almost the same (up to 7\% of relative error to the classical backpropagated gradient, bottom) as the ones that can be computed using the usual backpropagation and the chain-rule. Generalization is also quite good with the accuracy on the test set that reaches 0.981 after only 50 epochs of training, which is quite quick and efficient, considering that none of the usual tricks (Adam, RMSProp, ...) were used in this setting.

As a conclusion, this biologically inspired neural network, following (MB) hypothesis is able to quickly learn in a robust way the classification MNIST benchmark, with locally computed gradients that closely approximate backpropagation.

\subsection{Classification on MNIST}
\label{subsec:classification_results}

\begin{figure}[ht]
\begin{subfigure}{.5\textwidth}
 \includegraphics[width=\textwidth]{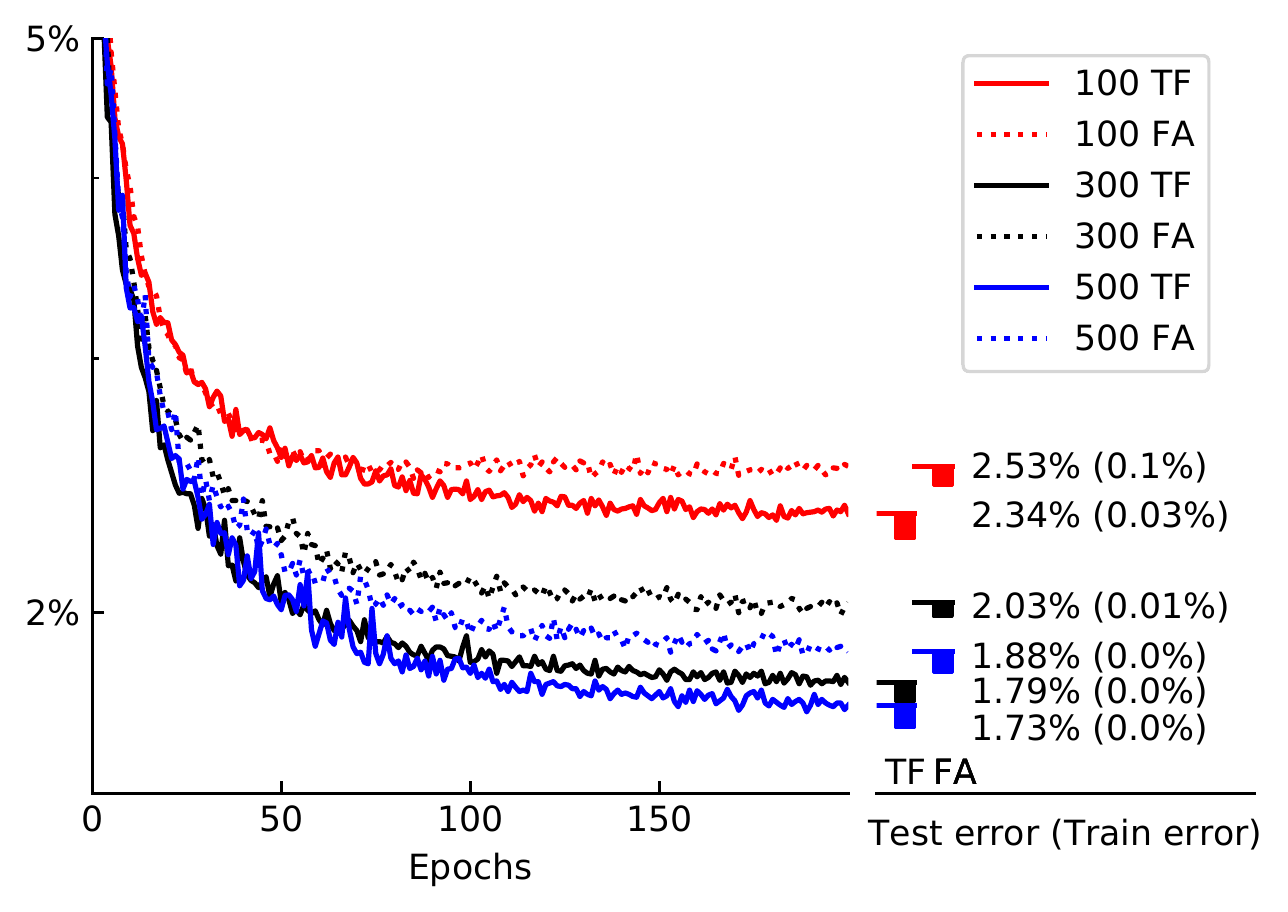}
\caption{Model A (MA)}
\label{fig:ma_results}  
\end{subfigure}
\begin{subfigure}{.5\textwidth}
 \includegraphics[width=\textwidth]{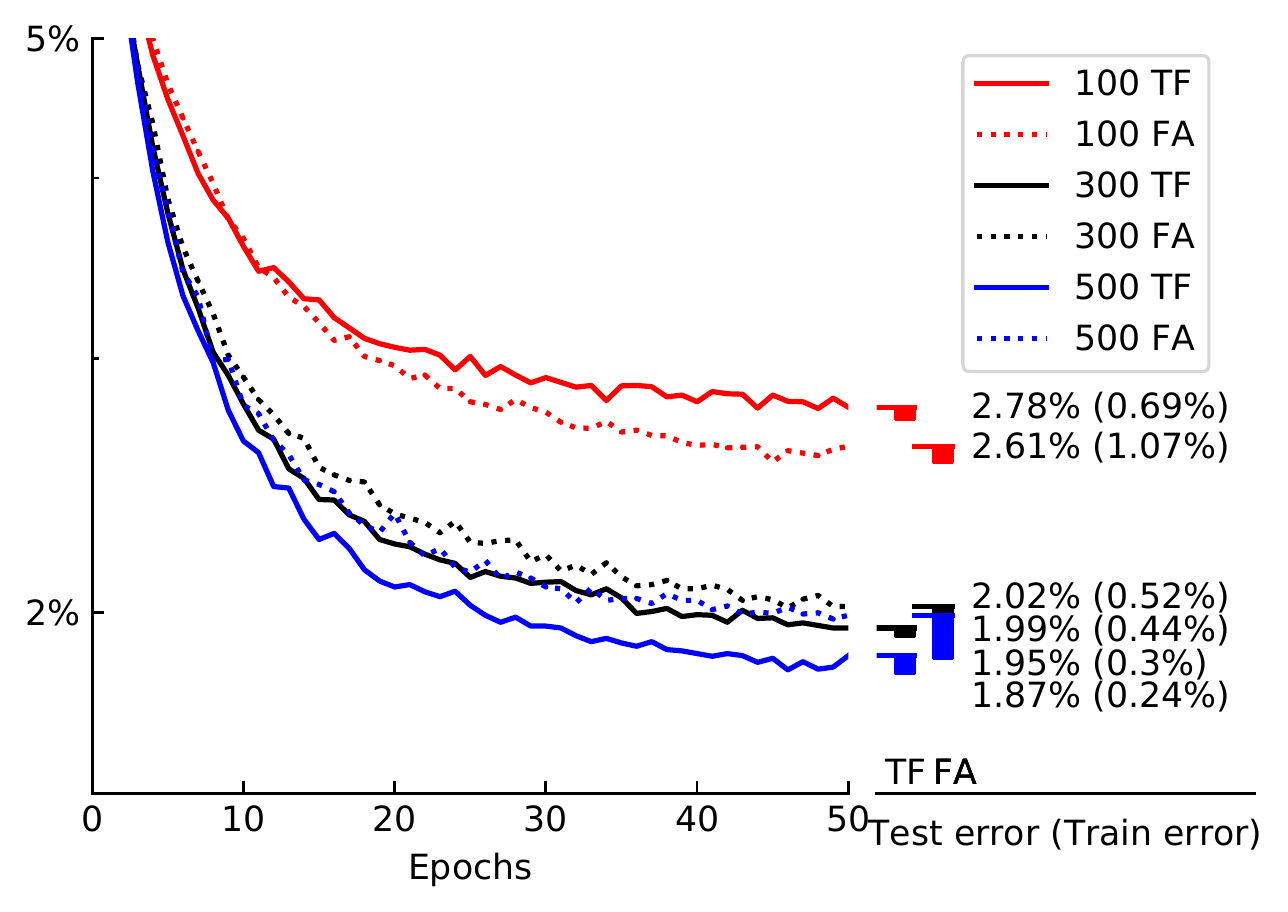}
\caption{Model B (MB)}
\label{fig:mb_results}  
\end{subfigure}
\caption{Classification on MNIST using networks with one hidden layer. Test error on MNIST as a function of the number of epochs for one hidden-layer network with 100 (red), 300 (black) and 500 (blue) neurons. Both TF (plain lines) and FA (dotted lines) are represented. On the right are detailed the mean values of test and train accuracies (between parenthesis). For each network, the standard deviation over 5 experiments is represented by the width of the area below the ticked mean values.}
\end{figure}

We have tested several types of networks on the MNIST dataset~\cite{LeCun+98}, see Table~\ref{table:ma} for the results. We looked at both (MA) and  (MB) models while using either Transpose-feedback (TF) or Feedback-alignment (FA) (see Figure~\ref{fig:ma_results} for (MA) and Figure~\ref{fig:mb_results} for (MB)). We tested different numbers of units per hidden layer and different numbers of hidden layers. Simulations were run on the GPU cluster Cedar, Compute Canada (www.computecanada.ca).

We ran 5 experiments for each model, and represent these results in Table~\ref{table:ma}. The results approach state-of-the-art accuracies for multilayer perceptron, with (MA) performing slightly better than (MB).
For both models, raising the number of neurons resulted in a rise in performance.\\

\paragraph{(MA) performance on the MNIST task}
1-layer and 2-layer (MA) networks compete with the state-of-the-art when using a multilayer perceptron trained with backpropagation. Training with 1-layer (MA) networks is stable and works well when using transpose-feedback and feedback-alignment. Increasing the number of units per layer helps improving performances as shown in~Figure~\ref{fig:ma_results}. Let's note that we were able to use a relatively big $\beta$ during the weakly-clamped phase that speeds up and stabilize learning.

Training with 2-layer (MA) networks is stable and works also well across a large range of hyperparameters when using feedback-alignment. However, when using transpose-feedback training is a bit less stable and requires a more precise hyperparameters search to achieve great performances. Accuracies and hyperparameters are shown in~Table~\ref{table:ma} and~Table~\ref{table:ma_parameters}.

We also realized 1-layer experiments with all synaptic updates occurring at all times during the free and the weakly-clamped phases (MA') (Table~\ref{table:ma_prime}). Theses networks were harder to train but are still able to perform quite well considering all the assumptions that are made (around 3.5\% of test error). These results are comparable to the results from~\cite{guerguiev2017towards,sacramento2018dendritic}.  To get stable behaviors, the updates for $V^b$ were clipped.  Otherwise it sometimes diverged at the beginning of the learning. This hypothesis is biologically plausible, through some saturation mechanisms. 

\paragraph{(MB) performance on the MNIST task}

For (MB) networks, 1-layer network were associated with one layer of $5$ ghosts units (to replicate the feedback from the $10$ pyramidal output units). For 2-layer network, we used two layers of ghost units of respectively $20$ and $5$ neurons. The other parameters are gathered in Table~\ref{table:mb_parameters}.

1-layer (MB) networks perform really well compared to state-of-the-art results, both in the case of transpose-feedback and feedback-alignment. Using networks with more neurons (100 - 300 - 500) also helps reaching higher performance.

For 2-layer (MB) networks with transpose-feedback weights, the networks were harder to train and sometimes unstable. This could be easily explain by the fact that the error that is made when backpropagating the gradient through the biological backpropagation is proportional to the amplitude of the feedback weights. In 2-layer networks with transpose feedback, they grow at the same speed than $W^f$. If they grow too large, they can induce some unwanted errors and make the network unstable. These problems were totally solved when using the feedback-alignment version of (MB), where the amplitude of the feedback weights is fixed. In particular, the higher accuracies for (MB) are reached with the 2-layer version, with feedback-alignment. Adding layers helps the network to perform better and the biological backpropagation is efficient through several layers.

\begin{table}[ht]
\begin{center}
\begin{tabular}{cc|cc|cc} \toprule
    \multicolumn{2}{c}{Architecture} & \multicolumn{2}{c}{TF} & \multicolumn{2}{c}{FA}\\ \toprule
    Model & \#units & Train & Test & Train & Test\\ \toprule
    (MA) & 100    & 99.97 & 97.66 & 99.90 & 97.47 \\
    (MA) & 300    & 100 & 98.21 & 99.99 & 97.97 \\
    (MA) & 500    & 100 & 98.27 & 100 & 98.12 \\
    (MA) & 500/500&99.67& 97.86 & 99.93 & 98.05   \\
    \midrule
    (MA') & 500& 97.77 & 96.57 & - & - \\
    \midrule
     (MB)& 100 & 99.31 & 97.22 & 98.93 & 97.39 \\
     (MB)& 300 & 99.70 & 98.05 & 99.48 & 97.98  \\
     (MB)& 500 & 99.76 & 98.13  & 99.56 & 98.01  \\
     (MB)& 300/300 & 99.84 & 97.95  & 99.78 & 98.05  \\ 
     (MB)& 500/500 & 99.91 & 98.13  & 99.85 & 98.21  \\     \midrule
    \cite{Lillicrap-et-al-nature2016} & 1000 & - & 97.6  & - & 97.9  \\ 
    \cite{nokland2016direct} & 800/800 & - & 98.33  & - & 98.18  \\ 
    \cite{guerguiev2017towards} & 500 & - & 96.4 & - & 95.9 \\
    \cite{guerguiev2017towards} & 500/100 & - & - & - & 96.8  \\
    \cite{sacramento2018dendritic,NIPS2018_8089} & 500/500 & - & 98.04  & - & -  \\
    \bottomrule
\end{tabular}

\caption{Accuracy results on MNIST with (MA) and (MB) in a network with one and two hidden layers.}
\label{table:ma}
\end{center}
\end{table}

\paragraph{Comparison to other works}
We saw that both models (MA) and (MB) were competing with state-of-the-art multilayer perceptrons on MNIST, as in~\cite{Lillicrap-et-al-nature2016,nokland2016direct} where classical backpropagation is used.
As we can see in~Table~\ref{table:ma}, we get higher accuracies than other previous biologically-plausible models~\cite{guerguiev2017towards,sacramento2018dendritic,NIPS2018_8089} in our setting with different updates rules during the two phases. Even when updating all weights during both phases (MA'), we get similar results than a model with segregated dendrites compartment ~\cite{guerguiev2017towards}, where updates are done in two different phases, but with a spiking neural network. We however cannot reach an accuracy as high as in~\cite{sacramento2018dendritic,NIPS2018_8089} when considering updates in both phases as they do, this may be due to the fact they use different compartments. In conclusion, both models presented in this work were able to reach accuracies comparable to backpropagation, with a simple biologically-plausible setting.

\section{Conclusion}
\label{sec:conclusion}

Deep learning has been the focus of intense studies in the past decade and has become more and more efficient in solving diverse and complex tasks, that range from pattern recognition, to image generation, NLP (natural language processing) and many others. Many ideas that had some success in deep learning originally come from neuroscience, and making link between both worlds has been the focus of many studies recently~\cite{Marblestone-et-al-2016}. In particular, different mechanisms have been developed to update the parameters of an artificial neural network. Backpropagation~\cite{almeida1987learning} has been the canonical and most used way of training a network. Many frameworks have been developed to enable efficient gradient computations thanks to the use of the chain rule~\cite{2016arXiv160502688short,tensorflow2015-whitepaper}. However backpropagation as it is commonly used has many properties that are not biologically plausible~\cite{bengio2015towards, neftci2017event}.\\

We developed in this work a class of neural networks models, that enables learning as would backpropagation but with local learning rules. To the feedforward neural network of pyramidal units (which represents a classical multilayer perceptron) we add a second network of inhibitory interneurons, which we denote as ghost units. Connections between all layers from both networks exist making the whole network a complex recurrent one. The dynamics of the models are separated into two phases as in~\cite{scellier2017equilibrium}. During the free phase, the ghost units network learns to perfectly replicate the feedback from the pyramidal units, and therefore cancel any feedback coming from the upper layers. At equilibrium, the network propagates the correct feedforward graph and can perform a classification task. During the weakly-clamped phase, the output pyramidal units are nudged in order to reduce the output cost function. The error is backpropagated through the layers thanks to the recurrent dynamic itself. We considered two different models, the first one (MA) where each pyramidal unit has an associated ghost unit in the previous layer. The ghost unit network learns throughout the training to replicated the pyramidal one. In the second model (MB), we consider fewer ghost units that dynamically learn to replicate the feedback from the pyramidal units at each different pattern presentation. We prove for both models that, under some hypotheses, the locally defined learning rules approximate classical backpropagation, with condensed notations and proofs. Moreover, we tested both model on the MNIST classification tasks, with different architectures (number of neurons, number of layers) and proved that these networks were able to accomplish such tasks as well as with backpropagation. We also made links with feedback-alignment~\cite{Lillicrap-et-al-nature2016}. Finally, we were able to loosen some of the assumptions made in our models such as updating all the synaptic weights at all time, while still having a trained network that was able to perform correctly in the pattern recognition task. This single compartment model only expresses the bare properties needed to have a credit assignment mechanism, and as such, with simple notations and proof, straightforwardly highlights a possible implementation of credit assignment in the brain.\\

The class of models presented in this work was built upon some properties of the mammalian cortex. In particular, the different layers of the pyramidal neural network represent the integration of the sensory stimulus (from other brains areas, thalamus for example) through several layers of cortical neurons (through higher order regions of the brain). As in~\cite{sacramento2018dendritic,NIPS2018_8089}, we also consider a population of inhibitory interneurons that would be coherent with the neurophysiological properties and role of SST interneurons. In particular, their role would be to cancel top-down feedback from the pyramidal neurons. Some recent experiments~\cite{LEINWEBER20171204} showed that pyramidal neurons project back through top-down projections to the interneurons of the previous layer, which would be coherent with these interneurons replicating the activity from upper layers. Other works such as~\cite{ENIKOLOPOV2018135} showed that synaptic plasticity could generate a negative image of the input in the electric fish and they illustrate the importance of this kind of signal for improvements in neural coding and detection of perturbations. The models presented in this work are able to backpropagate the output layer error, through the different layers without needing gradients computations. The network is able to learn thanks to local learning rules, which have been proven to have some biological relevance and links to STDP (spike-timing dependent plasticity)~\cite{bengio2017stdp,Feldman-2012}. We use single-compartment leaky-integrate neurons, that can be seen as a really simple approximation of biological neurons. 

However some properties of the presented network can hardly be seen as biologically plausible. Firstly, in (MA), we considered that there was a 1-1 correspondence between the pyramidal cells and the ghost units. This is hardly true in biological neuronal network, however we could consider that each pyramidal cell models the activity of several pyramidal neurons and as such, forget this hypothesis. We also developed (MB), where the number of ghost units can be set to an arbitrary (smaller) number and forget 1-1 correspondence. For this we suppose that the network of ghost units is able to dynamically replicate the feedback from the pyramidal cells through some rapid plastic mechanisms (as can be seen with Post-Tetanic Potentiation~\cite{storozhuk2002post, xue2010post}), which may not be biologically implemented directly without any neurotransmitter modulation. However, as we present two models that rely on opposite hypotheses, it would be possible to compose a large class of models sharing properties from both of these extreme cases. We could then have, at the same time, an adaptive process, that learns on long time scale how to cancel feedback but that can also adapt to variations in the input, making it closer to biological observations. One possible way to implement such a neural network would be to make the ghost units of (MA) replicate a linear combination of the pyramidal units activity and as such use fewer ghost units. Training 2-layer networks with transposed feedback weights was in some case unstable due to the fact that the transmission of the backpropagated error was proportional to the amplitude of the feedback weights, and therefore of the forward ones. Using feedback-alignment solves this issue, as the scale of the feedback weights was fixed. However we can imagine that this problem could also be overcome in the brain thanks to normalization mechanisms such as weight decay or weights regularization, that could be implemented by biologically plausible mechanisms.

We could also implement integrate-and-fire neurons and make the link with spiking deep networks~\cite{NIPS2011_4383} in order to go further towards biological networks. We focused on pattern recognition in this work, but ghost units could also help to make biologically plausible networks for other tasks of deep learning such as Generative Adversarial Networks~\cite{Goodfellow-et-al-NIPS2014-small} or Deep Reinforcement Learning, by adding some other features such as the influence of neurotransmitters on local learning rules.

This work is a step towards apprehending the mechanics of learning and memory in the brain, but at the same time, it raises interesting perspectives for implementation of deep networks in neuromorphic hardware.

\bibliographystyle{unsrt}
\bibliography{main}

\begin{table}[ht]
\begin{center}
\begin{tabular}{l|ccc|cc}
\toprule
    Hidden Layers            & 1 & 1 & 1 & 2  \\
    Pyramidal Units per layer (784, X, 10)  & 100 & 300 & 500 & 500, 500 \\
    Ghost Units per layer (-, X, -)    & 10 & 10 & 10 & 500, 10 \\   \midrule
    $\diff t$                & 0.001 & 0.001 & 0.001 & 0.001 \\
    $\tau$                & 0.01 & 0.01 & 0.01 & 0.01 \\
    $\gamma$  & 0.2 & 0.2 & 0.2 & 0.1 \\
    Free Steps         & 200 & 200 & 200 & 200 \\
    Weakly-clamped Steps            & 200 & 200 & 200 & 200  \\
    Nudge size ($\beta$)     & 10 & 10 & 10 & 10 \\
    Batch Size               & 100 & 100 & 100 & 100  \\
    Epochs                   & 200 & 200 & 200 & 300 \\ \midrule
    Learning Rate $W$ ($\eta_W$) TF     & 0.1, 0.1 & 0.1, 0.1 & 0.1, 0.1 & 0.01, 0.01, 0.01  \\
    Learning Rate $V$($\eta_V$) TF     & 0.05 & 0.05 & 0.05 & 0.01, 0.01 \\
    Accuracy Train TF        & 99.97 & 100 & 100     & 99.67\\
    Accuracy Test TF         & 97.66 & 98.21 & 98.27 & 97.86 \\\midrule
    Learning Rate $W$ ($\eta_W$) FA     & 0.1, 0.1 & 0.1, 0.1 & 0.1, 0.1 & 0.05, 0.05, 0.05  \\
    Learning Rate $V$ ($\eta_V$) FA     & 0.05 & 0.05 & 0.05 & 0.02, 0.02 \\
    Accuracy Train FA        & 99.90 & 99.99 & 100 & 99.93\\
    Accuracy Test FA         & 97.47 & 97.97 & 98.12 & 98.05 \\\bottomrule
\end{tabular}
\caption{Parameters for MNIST classification with (MA)}
\label{table:ma_parameters}
\end{center}
\end{table}
    
\begin{table}[ht]
\begin{center}
\begin{tabular}{l|c}
\toprule
    Hidden Layers            & 1   \\
    Pyramidal Units per layer (784, X, 10)  & 500 \\
    Ghost Units per layer (-, X, -)    & 10 \\   \midrule
    $\diff t$                & 0.001  \\
    $\tau$                & 0.01  \\
    $\gamma$  & 0.2 \\
    Free Steps         & 200  \\
    Weakly-clamped  Steps            & 200  \\
    Nudge size ($\beta$)     & 0.2  \\
    Batch Size               & 100  \\
    Epochs                   & 100 \\ \midrule
    Learning Rate $W$ ($\eta_W$) TF     & 0.2, 0.2  \\
    Learning Rate $V$($\eta_V$) TF     & 0.03  \\
    Clipping $\Delta V^b$         &2\\
    Accuracy Train TF        & 97.77 \\
    Accuracy Test TF         & 96.57 \\\bottomrule
\end{tabular}
\caption{Parameters for MNIST classification with (MA')}
\label{table:ma_prime}
\end{center}
\end{table}

\begin{table}[ht]
\begin{center}
\begin{tabular}{l|ccc|cc}
\toprule
    Hidden Layers            & 1 & 1 & 1 & 2 & 2  \\
    Pyramidal Units per layer (784, X, 10)  & 100 & 300 & 500 & 300, 300 & 500, 500 \\
    Ghost Units per layer (-, X, -)        & 5 & 5 & 5 & 20, 5 & 20, 5 \\   \midrule
    $\diff t$                & 0.005 & 0.005 & 0.005 & 0.005 & 0.005  \\
    $\tau$                & 0.01 & 0.01 & 0.01 & 0.01 & 0.01  \\
    $\gamma$ & 0.05 & 0.05 & 0.05 & 0.05 & 0.05 \\
    Weakly-clamped Steps            & 40 & 40 & 40 & 40 & 40  \\
    Nudge size ($\beta$)     & 0.1 & 0.1 & 0.1 & 0.1 & 0.1 \\
    Learning Rate $V$ ($\eta_V$)        & 20 & 20 & 20 & 20 & 20  \\
    Batch Size               & 1 & 1 & 1 & 1 & 1  \\
    Epochs                   & 50 & 50 & 50 & 50 & 50 \\ \midrule
    Free Steps TF           & 100 & 100 & 100 & 140 & 140 \\
    Learning Rate $W$ ($\eta_W$) TF     & 4, 0.04 & 4, 0.04 & 4, 0.04 & 40.0, 0.1, 0.005 & 40.0, 0.1, 0.01  \\
    Accuracy Train TF        & 99.31 & 99.70 & 99.76 & 99.84 & 99.91\\
    Accuracy Test TF         & 97.22 & 98.05 & 98.13 & 97.95 & 98.13 \\\midrule
    Free Steps FA           & 100 & 100 & 100 & 100 & 100 \\
    Learning Rate $W$ ($\eta_W$) FA     & 8, 0.2 & 8, 0.2 & 8, 0.2 & 200.0, 2.0, 0.02 & 200.0, 1.0, 0.01  \\
    Accuracy Train FA        & 98.93 & 99.48 & 99.56 & 99.78 & 99.85\\
    Accuracy Test FA         & 97.39 & 97.98 & 98.01 & 98.05 & 98.21\\\bottomrule
\end{tabular}
\caption{Parameters for MNIST classification with (MB)}
\label{table:mb_parameters}
\end{center}
\end{table}

\end{document}